\newcommand{\Real}{\mathbb{R}}
\newcommand{\argmin}{\qopname\relax m{arg\,min}}
\newtheorem{theorem}{Theorem}[section]
\newtheorem{lemma}{Lemma}
\newtheorem{corollary}{Corollary}
\newtheorem{condition}{Condition}
\begin{document}

\begin{titlepage}
	\title{Quantile Predictions for Equity Premium using Penalized Quantile Regression with Consistent Variable Selection across Multiple Quantiles}
	
	\author{Shaobo Li and Ben Sherwood\thanks{Shaobo Li (\href{mailto:shaobo.li@ku.edu}{shaobo.li@ku.edu}) is Associate Professor, School of Business, University of Kansas. Ben Sherwood (\href{mailto:ben.sherwood@ku.edu}{ben.sherwood@ku.edu}) is Jack and Shirley Howard Mid-Career Professor, Associate Professor, School of Business, University of Kansas}}
	\date{}
	
	\maketitle
	
	\begin{abstract}
		This paper considers equity premium prediction, for which mean regression can be problematic due to heteroscedasticity and heavy-tails of the error. We show advantages of quantile predictions using a novel penalized quantile regression that offers a model for a full spectrum analysis on the equity premium distribution.  To enhance model interpretability and address the well-known issue of crossing quantile predictions in quantile regression, we propose a model that enforces the selection of a common set of variables across all quantiles. Such a selection consistency is achieved by simultaneously estimating all quantiles with a group penalty that ensures sparsity pattern is the same for all quantiles.
		Consistency results are provided that allow the number of predictors to increase with the sample size. A Huberized quantile loss function and an augmented data approach are implemented for computational efficiency. Simulation studies show the effectiveness of the proposed approach. Empirical results show that the proposed method outperforms several benchmark methods.
		Moreover, we find some important predictors reverse their relationship to the excess return from lower to upper quantiles, potentially offering interesting insights to the domain experts. Our proposed method can be applied to other fields.
	\end{abstract}

	\noindent%
	{\it Keywords:}  return prediction; stock variance; quantile regression; group lasso; penalized regression
	\vfill

\end{titlepage}

\section{Introduction}\label{sec:Intro}

Equity premium predictability has been a long debate among empirical finance researchers and practitioners. Early studies have found dividend ratios such as dividend-price ratio, dividend-payout ratio, and dividend yield are predictive to the future excess return at longer horizons. \citep{campbell1988dividend,fama1988dividend}. \citet{welch2008comprehensive} comprehensively examines a set of well established predictors, including dividend ratios, stock volatility, various interest rates and spreads among others, for predicting the excess return, and they assess both in-sample and out-of-sample (OOS) predictability. Their empirical results show that these predictors do not generate stable out-of-sample prediction using the OLS model, and cannot outperform the simple prediction based on the historic average. In contrast, some other studies have shown empirical evidences that the equity premium is predictable. 
\citet{campbell2008predicting} imposes sign restrictions and show that the modified model outperforms the historic mean. \citet{rapach2010out} demonstrate that combining (averaging) the predictions from individual predictors (e.g., simple linear regression) generates stable OOS prediction. More recently, \citet{goyal2024comprehensive} provides another comprehensive review of this stream of research by examining over 40 variables discussed by numerous articles from top finance journals. Again, their findings are mixed with most predictors fail to deliver significant out-of-sample predictability.

The primary statistical tool used in most existing studies is the ordinary least square (OLS) model. However, the standard approach for estimating quantiles using OLS relies on the key assumption of normally distributed errors with constant variance, assumptions that are often violated in financial data. Stock returns, in particular, tend to be noisy and heteroscedastic, exhibiting non-Gaussian distributions. Within the linear model framework, quantile regression \citep{qrBook}, by contrast, does not rely on these assumptions and is well known for its robust estimation and ability to effectively handle heteroscedasticity. Quantile regression has become an extremely useful statistical model in economics and finance. For instance, in financial risk management, quantile regression is widely used for estimating conditional value-at-risk (VaR) \citep{chernozhukov2001conditional,james2023forecasting}, a key risk measure of potential capital loss at extreme conditions. 
Few studies have explored the use of quantile regression in return prediction. \citet{meligkotsidou2014quantile} argues that modeling various quantiles is helpful to capture the distribution of the excess return as opposed to the mean regression, and they apply traditional quantile regression to examine the excess return predictability. We largely assent their argument of using quantile regression approach for equity premium prediction. 
\citet{lee2016predictive} develops a quantile regression with IVX filter that handles persistent covariates and use the method to test predictivity of eight commonly used predictors for excess return across various quantiles. However, they do not assess out-of-sample prediction accuracy in the empirical analysis, thus the predictive performance is unknown. 
Different from these works, we use more recent data, and develop a model that simultaneously estimates multiple quantiles of excess return conditional on a common set of predictors selected from a large set of variables.

We start by presenting evidence that the OLS model may fail in equity premium prediction while quantile regressions can be a better alternative.	
Figure \ref{fig:motivation1} presents out-of-sample 80\% prediction interval of the excess return, via an expanding window approach detailed in Section \ref{sec:empirical}, comparing OLS and the proposed quantile regression models, which will be introduced subsequently. Clearly the 80\% prediction intervals are largely disagreed between the two methods, where the prediction interval from OLS is constantly wider than that from the quantile regression (shaded area). In fact the latter indeed covers approximately 80\% of the observed excess returns across the evaluation period while the 80\% OLS prediction interval covers about 90\% of the observed data. This difference is potentially due to assumption violations such as the constant variance. As shown in Figure \ref{fig:motivation2}, some of the commonly used predictors present a clearly ``fan-shaped" relationship with the excess return, a strong evidence of a heteroscedasticity, which can be correctly modeled by the quantile regression. Later in our empirical analysis, our method reveals and quantifies such a heteroscedastic relationship with reversed coefficient signs from lower to upper quantiles, which otherwise remains undetectable through traditional mean regression. 
We strongly advocate for the use of quantile regression, not only to address the issues that arise from mean regression but also because it offers a full spectrum of estimation and prediction, which is particularly useful when modeling the stock market. 
\begin{figure}[h!]
	\centering
	\includegraphics[scale=0.65]{./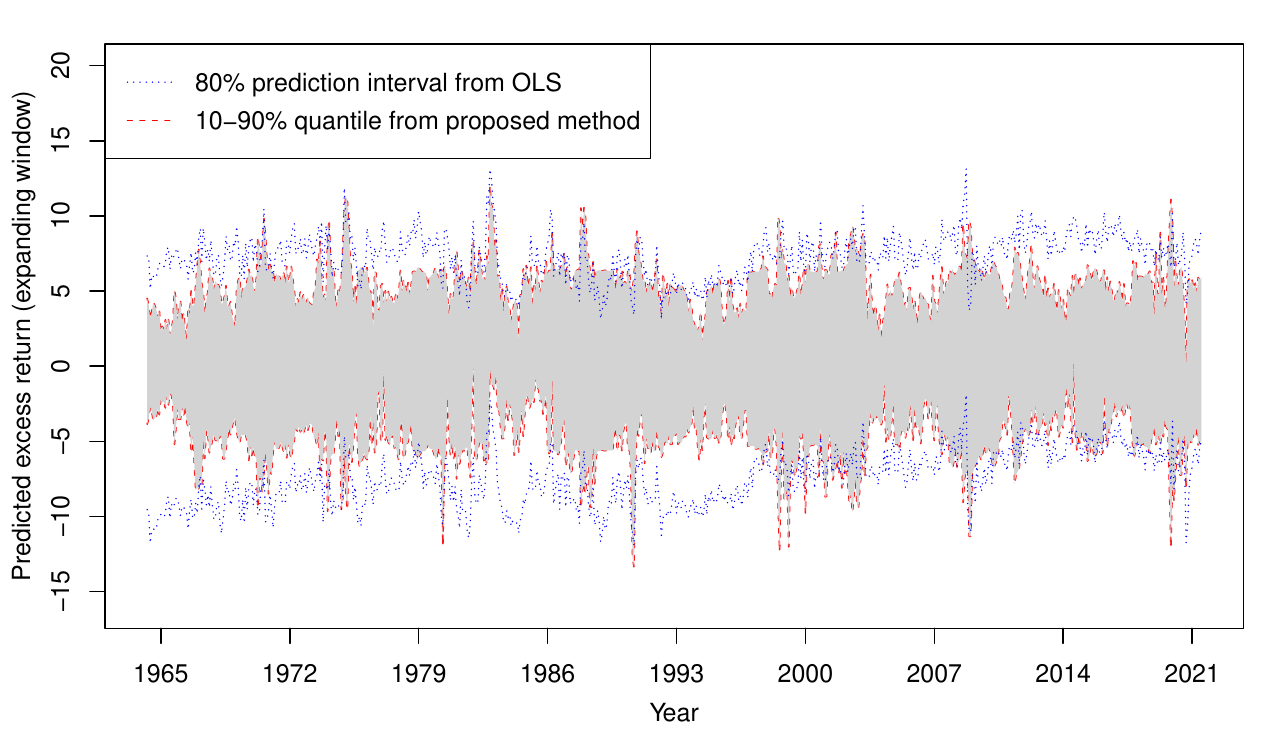}
	\caption{80\% prediction interval based on OLS and the proposed quantile regression.} \label{fig:motivation1}
\end{figure}

\begin{figure}[h!]
	\centering
	\includegraphics[scale=0.6]{./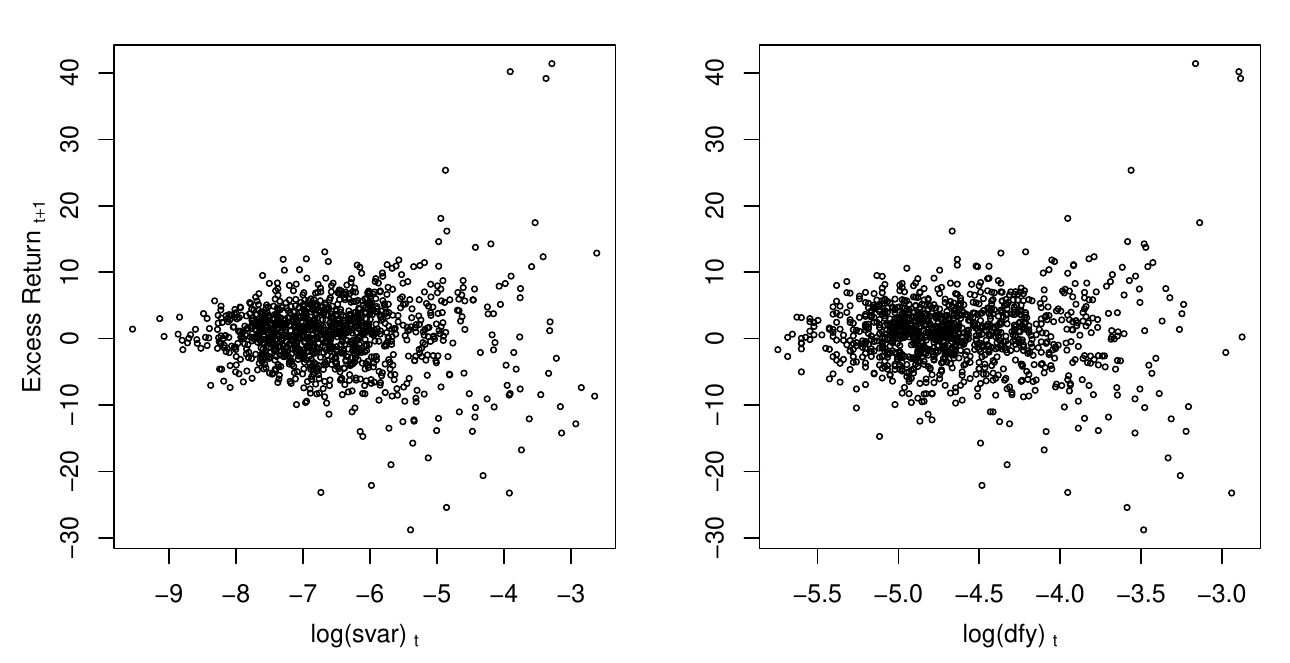}
	\caption{Scatter plot between excess return and two lagged predictors: stock variance (svar) and default yield spread (dfy) with logarithm transformation.} \label{fig:motivation2}
\end{figure}

There is a large literature on determining which variables are useful for equity premium prediction, but most of these works examine predictors individually with underlying asset pricing theories \citep{goyal2024comprehensive}. Modern statistical tools, on the other hand, allow efficient data-driven variable selection, among which penalized regression is one of the most popular ones due to its computational efficiency and simultaneous variable selection and model estimation.  \citet{lee2022lasso} considers LASSO-type \citep{lasso,adaptiveLasso} penalized mean regression for excess return prediction with persistent covariates. 
Much of the research in quantile regression has focused on extending popular penalties from mean regression to quantile regression \citep{qr_lasso,scad_qr,yi2017semismooth}.	These approaches consider estimating each quantile separately, by adding the penalty to the objective function proposed by \citet{origQR}. As a result the selected predictors can vary with the quantile being modeled. When estimating different quantiles for equity premium we find the selected variables often varies across the estimated models at different quantiles, which poses unnecessary complexity for model interpretation. Even worse, it can lead to an incoherent model that suffers from the so-called crossing quantiles, e.g., the predicted values at lower quantiles are larger than those at higher quantiles. In our empirical analysis, we found that using quantile regression with the Lasso penalty results in over 30\% predictions with crossing quantiles for out-of-sample predictions, see Figure \ref{fig:cq} for an illustration. 
Motivated by interest in providing a coherent model of excess return for multiple quantiles, we develop a novel penalized quantile regression method that guarantees the set of selected variables will be the same for all quantiles, significantly mitigating the issue of crossing quantiles. 
Our proposed method offers an easy-to-interpret model for equity premium predictions at multiple quantiles, providing a full spectrum of the estimation, potentially aiding decision-making for both risk-seeking and risk-averse investment preferences.




Quantile regression will often be beneficial in settings where the variance is non-constant. Consider the following location-scale model, which is frequently used to motivate quantile regression, 
\begin{equation}
	\label{locScale}
	y = \bm x^\top \bm \beta^* + (\bm x^\top \bm \zeta^*)\epsilon,
\end{equation}
where $\epsilon \sim F$, $\bm x \in \Real^{p+1}$ is the covariate vector, $\bm \beta^* \in \Real^{p+1}$ is the location coefficient vector including intercept, and $\bm \zeta^*\in \Real^{p+1}$ is the scale coefficient vector. Let $Q_\tau(y|\bm x)$ be the $\tau$th conditional quantile of $y$ given $\bm x$ and $F^{-1}_\epsilon(\tau)$ as the $\tau$th quantile of $\epsilon$. Under the assumption that $\bm x^\top \bm \zeta^*$ is positive then 
\begin{equation}
	Q_\tau(y \mid \bm x) = \bm x^\top[\bm \beta^*+\bm \zeta^* F^{-1}_\epsilon(\tau)] \equiv \bm x^\top \bm \beta^*_\tau,
\end{equation}
where $\bm \beta^*_\tau = (\beta^*_{\tau,0},\ldots,\beta^*_{\tau,p})^\top$. Then for the $j$th predictor, $\beta^*_{\tau,j}=\beta^*_j+\zeta^*_jF_\tau(\epsilon)^{-1}$. If $\beta^*_j=\zeta^*_j=0$ then $\beta^*_{\tau,j}=0$. Otherwise, $\beta^*_{\tau,j}=0$ only when $F^{-1}_\epsilon(\tau)=-\beta^*_j/\zeta^*_j$, which if $F$ is continuous will happen at most only for one value of $\tau$. In other words in \eqref{locScale} the coefficient of a predictor is either zero across all quantiles or non-zero across all quantiles, except for at most one quantile. Given that 
\eqref{locScale} is a popular motivating model for quantile regression, we believe this is strong motivation for considering consistent selection across quantiles.

To provide consistent selection across quantiles, we adopt a group lasso penalty \citep{yuan2007}, where each group is a predictor across all quantiles being modeled. As a result, a predictor is either selected or not selected for all the specified quantiles. We name the proposed method grouped-Quantile-Lasso (gQ-Lasso). A similar approach has been considered to select the same predictors for a multiple response mean regression model \citep{li2015}. We are not the first to consider a penalized approach that provides consistent selection of predictors across quantiles. \citet{ZOU20085296} discussed using the $L^2$ norm but instead proposed using the infinity norm. They specifically emphasize the computational advantages of using the infinity norm because that can be framed as a linear programming problem while incorporating the $L^2$ norm requires a second-order cone programming approach. However, in our simulations we demonstrate that our proposed algorithm for using the $L^2$ norm results in a much faster algorithm then using the infinity norm with a linear programming algorithm. Similar to this work, \citet{parkHeZhouMultQuant} motivate their estimator because of concerns about inconsistent selection across quantiles and crossing quantiles. They propose a method that allows the set of selected predictors to vary slowly across quantiles, but does not guarantee consistent selection across quantiles. Other methods provide a group penalty for the variables across the quantiles that promote consistent selection across quantiles, but do not guarantee it as they also allow the sparsity of the predictors to change across quantiles \citep{Peng2014-ql,HE2016222}. Assuming the location scale model of \eqref{locScale} holds, \citet{heteroIdQR} propose a method that guarantees consistent selection across quantiles and identifies which variables have a heteroscedastic relationship. They propose an approach similar to adaptive lasso \citep{adaptiveLasso} and assume $p < n$, while we consider a standard group lasso penalty and allow for $p > n$.



In addition to method development, algorithm, and theory, our study also contributes to the finance literature, particularly the field of equity premium prediction, in the following aspects. First, we demonstrate that the widely used OLS models can be problematic as the data show clear signs of heteroscedasticity and possible heavy-tailed errors. 	
Second, in terms of prediction, our empirical results show that the proposed method significantly outperforms mean regressions based on either least square or Huber loss for robust estimation, especially for the out-of-sample analysis. 
Finally, our empirical findings offer several insights that may shed new light for domain experts. First, we find little out-of-sample predictability for the full evaluation period (1965-2021) that includes several recessions, while the predictability is modest for periods without major recessions. 
Second, in evaluating prediction quantiles, we find that the predictions for higher quantiles are generally the best compared to middle and lower quantiles. Third and most interestingly, we are able to identify reversed relationships between excess return and several predictors from lower to upper quantiles. For instance, we find that \textit{stock variance (svar)} is negatively associated with excess return at lower quantiles but positive at higher quantiles, while existing studies only show a positive relationship \citep{guo2006out}. 


The rest of the paper is organized as following. Section \ref{sec:model} formally introduces the proposed method that is followed by theoretical properties discussed in Section \ref{sec:theory}. An efficient algorithm is outline in Section \ref{sec:algorithm}. We conduct extensive simulation studies in Section \ref{sec:simulation} and in-depth empirical analyses on equity premium prediction in Section \ref{sec:empirical}. 

\section{Model} \label{sec:model}

\subsection{Notation}
Vectors, such as $\bm a \in \Real^p$, will be represented by bold lowercase letters, while matrices, such as $\bm A \in \Real^{n \times p}$, will be represented by bold capital letters. For any vector $||\bm a||_q$ represents the $L^q$-norm and $||\bm a||_\infty$ is the maximum element of the vector $\bm a$. For any matrix $\bm A$, $||\bm A||$ is the spectral norm. For subspace $\mathcal{A} \subseteq \Real^{p}$, the vector $\bm a_{\mathcal{A}}$ is the projection of $\bm a$ to $\mathcal{A}$ and $\mathcal{A}^\perp$ is the orthogonal complement of $\mathcal{A}$. 

\subsection{The Model -- Grouped-Quantile Lasso}
Consider independent identically distributed random variables $\{y_i,\bm x_i\}_{i=1}^n$ with $y_i \in \Real$ and $\bm x_i = (1,x_{i1},\ldots,x_{ip})^\top \in \Real^{p+1}$. For K quantiles of $\tau_1 < \ldots < \tau_K$ consider the quantile regression model
\begin{equation}
	\label{model}
	y_i = \bm x_i^\top\bm \beta_{k}^* + \epsilon_i^k,
\end{equation}
where $\bm \beta_{k}^*=[\beta_{k0}^*,\ldots,\beta_{kp}^*]^\top \in \Real^{p+1}$ and $P(\epsilon_i^k < 0 | \bm x_i) = \tau_k$. In this paper we propose a method for simultaneously estimating $K$ quantiles of $\{\tau_1,\ldots,\tau_K\}$. Define $\bm \beta^{j*} = [\beta_{1j}^*,\ldots,\beta_{Kj}^*] \in \Real^{K}$ as the vector of coefficients for predictor $j$ for all quantiles.

Traditional approaches may select different variables at different quantiles. This is often not desired in practice. To this end, in this work we propose a new approach that consistently select variables across different quantiles. Specifically, we use the group-Lasso penalty so that for each predictor the $K$ quantiles to be modeled can be regarded as a group. That is, for any $j \in \{1,\ldots,p\}$ and $k' \in \{1,\ldots,k\}$ if $\hat{\beta}_{k'j} = 0$ then $\hat{\beta}_{kj} = 0$ for all $k \in \{1,\ldots,K\}$. To achieve this we propose minimizing the following penalized objective function, 
\begin{align}
	L_{\lambda}(\bm \beta| \tau_1,...,\tau_K; \mathbf{X}, \mathbf{Y})=\frac{1}{n}\sum_{k=1}^{K}\sum_{i=1}^{n}\rho_{\tau_k}(y_i- \bm x_i^\top \bm \beta_k)+\lambda \sum_{j=1}^{p} \|\bm \beta^j\|_2, \label{eq:obj1}
\end{align}
where $\rho_{\tau_k}()$ is the quantile loss function given quantile $\tau_k$, $\bm \beta=(\bm \beta_1^\top, ..., \bm \beta_K^\top)^\top \in \mathbb{R}^{K(p+1)}$ and $\bm \beta^j \in \mathbb{R}^K$ are the coefficients of predictor $X_j$ for $K$ different quantiles.   
The group lasso penalty in \eqref{eq:obj1} guarantees 
that a variable is either selected or not for all quantiles being modeled. \citet{li2015} proposed a similar method for estimating multiple mean regression models, where the response variables are different. Here, we are estimating multiple conditional quantiles of the same response. 


\section{Theoretical Properties} \label{sec:theory}
Define $\bm \beta^* = [\bm \beta_{1}^{*\top},\ldots,\bm \beta^{*\top}_{K}]^\top \in \Real^{K(p+1)}$ as the vector of all quantile coefficients. Let $\mathcal{S} = \{ j \in \{1,\ldots,p\} |\, ||\bm \beta^{j*}||_2 \neq 0\}$, be the set of predictors that are active at some quantile and $s=|\mathcal{S}|$ be the cardinality of that set. Let $\bar{\mathcal{S}} = \mathcal{S} \cup \{0\}$ be the intercept and set of active predictors. Note, as outlined in the introduction, it is possible that $\beta_{kj}^* \neq 0$ and $\beta_{k'j}^* = 0$ for some $k,k' \in \{1,\ldots,K\}$. However, we take the stance that it is desirable to select variables across all quantiles for the related issues of interpret-ability, and it being hard to identify a model where there tends to be disagreement in sparsity across quantiles. Without loss of generality we assume the last $p-s$ components of $\bm \beta_{k}^*$ are zero for all considered quantiles. That is for all $k \in \{1,\ldots,K\}$, $\bm \beta_{k}^* = [\bar{\bm \beta}_{k}^{*\top},\mathbf{0}^\top]^\top$ and $\bar{\bm \beta}^*_{k} \in \Real^{s+1}$. 

In this section we discuss some of the theoretical properties of $\hat{\bm \beta}$, proofs are provided in the supplementary material. \citet{negahban2012} provided a general framework for proving consistency for penalized estimators. The results in that paper are not directly applicable to the proposed estimator because they rely on a differentiable loss function. However, many of the ideas presented in that paper can be used to understand the motivation behind some of the conditions and definitions used in this paper. Typically proofs of sparse penalized estimators rely on the difference between the estimator and the truth belonging to a cone. As outlined in \citet{negahban2012} this will rely on a relationship between sparsity tuning parameter $\lambda$ and the dual of the penalty evaluated at the gradient of the loss function. Due to the loss function being non-differentiable we deal with the subgradient $v_{jk} = n^{-1}\sum_{i=1}^n [\tau_k-I(\epsilon_i^k \leq 0)]x_{ij}$ with $\bm v_j = (v_{j1},\ldots,v_{jK})^\top \in \Real^{K}$. The dual of the penalty evaluated at this subgradient of the loss function is $\Lambda = \underset{j \in \{0,\ldots,p\}}{\max} \left|\left| \bm v_j \right|\right|_2$, and the cone is
\begin{equation*}
	\mathcal{C} = \left\{\bm \Delta \in \Real^{K(p+1)} \middle| \sum_{j \in \bar{\mathcal{S}}^c} ||\bm \Delta^j||_2 \leq 3 \sum_{j \in \bar{\mathcal{S}}} ||\bm \Delta^j||_2\right\}.
\end{equation*}
%

The following lemma provides a deterministic results about the difference between the estimator and the truth residing in $\mathcal{C}$ depending on the values of $\lambda$ and $\Lambda$.
%

\begin{lemma}
	\label{lemCone}
	If $L_1 \lambda  \geq  \Lambda$ and $L_1 < C_1$ then $\hat{\bm \beta}-\bm \beta^* \in \mathcal{C}$. 
\end{lemma}

Lemma \ref{lemCone} provides a basic result of most penalized sparse estimators and is an important results for the proof of consistency. Additional conditions are needed to prove consistency. 

\begin{condition}
	\label{condDens}
	Let $F_{i}$, $f_{i}$ and $f'_{i}$ be the cdf, pdf and first derivative of the pdf for $y_i$ given $\bm x_i$. There exists a positive constants $c_1$ and $c_2$ such that $\underset{k \in \{1,\ldots,K\}, i \in \{1,\ldots,n\}}{\inf} f_{i}(t) \geq c_1$ and $\underset{k \in \{1,\ldots,K\}, i \in \{1,\ldots,n\}}{\inf} |f'_{i}(t)| \leq c_2$ for all $t \in \Real$.
\end{condition}

\begin{condition}
	\label{condX2}
	There exists a positive constant $C_x$ such that $\underset{i \in \{1,\ldots,n\}\, j \in \{0,\ldots,p\}}{\max}|x_{ij}| < C_x$.
\end{condition}

\begin{condition}
	\label{condX}
	There exists positive constants $b_1,\ldots,b_K$, $B_1,\ldots,B_K$, $\tilde{b}$ and $\tilde{B}$ such that for all $k \in \{1,\ldots, K\}$, 
	$
	\tilde{b} ||\bm \Delta_k||_2^2 \leq b_k ||\bm \Delta_k||_2^2 \leq \underset{\bm \Delta_k \in \mathcal{C}_k, ||\bm \Delta_k||_2 \neq 0}{\inf}  \bm \Delta_k^\top E[\bm x_i\bm x_i^\top]\bm \Delta_k \leq \underset{\bm \Delta_k \in \mathcal{C}_k, ||\bm \Delta_k||_2 \neq 0}{\sup}  \bm \Delta_k^\top E[\bm x_i\bm x_i^\top]\bm \Delta_k \leq B_k||\bm \Delta_k||_2^2 \leq \tilde{B} ||\bm \Delta_k||_2^2.
	$
\end{condition}

\begin{condition}
	\label{condRest}
	For $\bm \Delta = (\bm \Delta_1,\ldots,\bm \Delta_K)$, where $\bm \Delta \in \mathcal{C}$, there exists positive constants $d_1,\ldots,d_K$ and $\tilde{d}$, such that for all $k \in \{1,\ldots,K\}$
	$
	\tilde{d} \leq d_k = \frac{c_1}{c_2\sqrt{B}_k} \underset{\bm \Delta \in \mathcal{C}, ||\bm \Delta_k||_2 \neq 0}{\inf} \frac{E\left[(\bm x_i^\top\bm \Delta_k)^2\right]^{3/2}}{E\left[|\bm x_i^\top\bm \Delta_k|^3\right]}.
	$
\end{condition}

All of the conditions are common for quantile regression or penalized estimators. Conditions similar to Condition \ref{condDens} have appeared in many quantile regression papers \citep{qr_lasso,scad_qr,qr_group_lasso}, and was used in Section 4.2 of \citet{qrBook} to prove asymptotic normality of the standard quantile regression estimator. Condition \ref{condX2} assumes that the support is bounded. For a model with heteroscedasticity the quantile coefficients will vary with the quantile, but the fits do not need to cross if the support is bounded. \citet{negahban2012} provide a useful discussion on why Condition \ref{condX} is needed for penalized estimators. \citet{qr_lasso} introduced a condition similar to Condition \ref{condRest} for quantile regression with lasso and similar conditions have been used for group lasso penalties with quantile regression \citep{qr_group_lasso,sherwood2022quantile}. These conditions are used to prove the consistency of $\hat{\bm \beta}$.

\begin{theorem}
	\label{bigTheorem}
	Define $\delta^* = 8C^*\sqrt{\frac{s+1}{n}}\left[4D_1\sqrt{\log(p)}+\sqrt{2K}\right]$ where $D_1$ and $D_2$ as positive constants and. Assume Conditions \ref{condDens}-\ref{condRest} hold and $\lambda=C_x\sqrt{\frac{K}{n}}+2C_x\sqrt{\frac{D_2 K \log(p)}{n}}$ then with probability at least $1-16p^{1-D_1}-2p^{1-D_2}$, 
	\begin{equation*}
		||\hat{\bm \beta}-\bm \beta^*||_2 \leq \frac{3}{c_1 \tilde{b}} \left[ \max\left(\delta^*, K\sqrt{\frac{8\tilde{B}}{n}}\right)+c_2\lambda \sqrt{s}\right],
	\end{equation*}
	provided the upper bound is smaller than $\tilde{d}$. 
\end{theorem}

The following corollary applies Theorem \ref{bigTheorem} to obtain a rate of convergence for the proposed estimator. 
\begin{corollary}
	\label{cor1}
	If assumptions of Theorem \ref{bigTheorem} hold, $p \rightarrow \infty$, $s \log(p) = o(n)$, $K=O(1)$, $D_1>1$, and $D_2 > 1$ then $||\hat{\bm \beta}-\bm \beta^*||_2 = O_P\left[\sqrt{\frac{sK\log(p)}{n}}\right]$.
\end{corollary}

Corollary holds because under the stated conditions, $1-16p^{1-D_1}-2p^{1-D_2} \rightarrow 1$, $\sqrt{\frac{sK\log(p)}{n}}=o(1)$, and dominates all other terms in the upper bound of Theorem \ref{bigTheorem}. Corollary \ref{cor1} provides the same rate of convergence as the estimator that replaces the group lasso penalty with an individual lasso penalty for each quantile-predictor combination \citep{qr_lasso}.

\section{Algorithm} \label{sec:algorithm}
Solving \eqref{eq:obj1} is challenging because both the loss and penalty functions are non-differentiable. It can be framed as a second-order cone programming problem but algorithms for solving these problems are typically slow for large values of $n$ or $p$.
It has been shown that using a differentiable Huber loss to approximate the nondifferentiable quantile loss can significantly improve the computational efficiency while preserving the estimation accuracy \citep{yi2017semismooth,sherwood2022quantile,Ouhourane2022}. Specifically, the Huber-approximated quantile loss is defined as
\begin{align}
	\rho_{\tau}(u)\approx\frac{1}{2}\left[h_{\gamma}(u)+(2\tau-1)u\right] \equiv \frac{1}{2}h_{\gamma}^{\tau}(u), \label{eq:haq}
\end{align}
where $h_{\gamma}(u)$ is the classical Huber loss defined as
\begin{equation*}
	h_{\gamma}(u)=\frac{u^2}{2\gamma}I(u \leq \gamma) + (|u|-\frac{\gamma}{2})I(|u|>\gamma).
\end{equation*}
Then the minimizer of the objective function \eqref{eq:obj1} can be approximated by the minimizer of the following objective function when $\gamma \to 0$, 
\begin{align}
	L^{\gamma}_{\lambda}(\bm \beta| \tau_1,...,\tau_K; \mathbf{X}, \mathbf{Y})=\frac{1}{2n}\sum_{k=1}^{K}\sum_{i=1}^{n}h_{\gamma}^{\tau_k}(y_i-\bm x_i^\top \bm \beta_k)+\lambda \sum_{j=1}^{p}\|\bm \beta^j\|_2. \label{eq:obj2}
\end{align}

Since the problem is framed as the quantile regression with group lasso penalty, we adopt the algorithm from \citet{sherwood2022quantile}, which is motivated by the groupwise-majorization-descent (GMD) algorithm \citep{yang2015fast}. To implement the algorithm, we augment the response and covariate matrix. 
Specifically, we construct an augmented data frame $(\bm Y_{aug}, \bm X_{aug})$, where $\bm Y_{aug}=(\bm Y^\top, \bm Y^\top,..., \bm Y^\top)^\top\in \mathbb{R}^{nK}$ and $\bm X_{aug}=\bm I_K \bigotimes \bm X \in \mathbb{R}^{nK \times pK}$ where $\bm I_K$ is a $K$-dimensional identity matrix. By reordering $\bm X_{aug}$ we obtain $[\bm I_K \bigotimes X_1 \quad \bm I_K \bigotimes X_2 \quad \cdots \quad \bm I_K \bigotimes X_p]$, so that $\bm I_K \bigotimes X_j$ can be regarded as a group of $K$ covariates for $j=1,...,p$, where $X_j=(x_{1j},...,x_{nj})$ is the $j$th column of the design matrix $\bm X$. 

To apply the GMD algorithm, we only need to verify that the unpenalized loss function, denoted $L^{\gamma}(\bm \beta)$ (i.e., the objective function \eqref{eq:obj2} without the penalty term), satisfies the quadratic majorization (QM) condition. That is, $L^{\gamma}(\bm \beta)$ is differentiable everywhere, which is trivial, and  for any $\bm \beta$ and $\tilde{\bm \beta}$, 
\begin{align}
	L^{\gamma}(\bm \beta)\le L^{\gamma}(\tilde{\bm \beta})+(\bm \beta- \tilde{\bm\beta})^\top\nabla L^{\gamma}(\tilde{\bm \beta})+\frac{1}{2}(\bm \beta-\tilde{\bm \beta})^\top \bm H (\bm \beta-\tilde{\bm \beta}), \label{eq:QM}
\end{align}
where $\bm H$ is a positive-definite matrix. Based on Lemma 1 of \citet{yang2015fast}, \eqref{eq:QM} holds if the first-order derivative of the loss function $h_{\gamma}^{\tau}$ is Lipschitz continuous, and under such a case $\bm H=\frac{2C}{n}\bm X_{aug}^\top \bm X_{aug}$ where $C$ is the Lipschitz constant. This has been verified by \citet{sherwood2022quantile} for their Huber-approximated loss function, which is the same as our $h_{\gamma}^{\tau}$. Specifically, the Lipschitz constant for $\frac{1}{2}h_{\gamma}^{\tau}$ is $\frac{1}{2\gamma}$, so that the positive definite matrix in \eqref{eq:QM} is $\bm H=\frac{1}{n\gamma}\bm X_{aug}^\top  \bm X_{aug}$. 

As detailed in \citet{sherwood2022quantile}, given the shrinkage parameter $\lambda$, the algorithm iteratively updates the coefficients of one group at a time. In our case, with the augmented data, the coefficient vector $\bm \beta^j$ of ``group" $j$ for $j\in\{0,1,...,p\}$, where $j=0$ indicates the intercepts, is updated by
\begin{align*}
	\tilde{\bm \beta}^0(new)&=\left\{-\gamma\left[\nabla L^{\gamma}(\tilde{\bm \beta})\right]^0+\tilde{\bm \beta}^0 \right\},\\
	\tilde{\bm \beta}^j(new)&=\frac{1}{\xi_j}\left\{-\gamma\left[\nabla L^{\gamma}(\bm \tilde{\bm \beta})\right]^j+\xi_j  \tilde{\bm\beta}^j\right\}\left(1-\frac{\lambda}{\left\|-\left[\nabla L^{\gamma}(\tilde{\bm \beta})\right]^j+\xi_j  \tilde{\bm \beta}^j \right\|_2}\right)_{+},
\end{align*}  
where $\left[\nabla L^{\gamma}(\bm \tilde{\bm \beta})\right]^j$ is the gradient of the unpenalized loss corresponded to ``group" $j$, i.e., the $j$th covariate, and $\xi_j$ is the largest eigenvalue of the sub-matrix $\bm H_j$. 
The algorithm updates $\tilde{\bm \beta}^0, \tilde{\bm \beta}^1, ..., \tilde{\bm \beta}^p$ sequentially and then iterates through the cycle till the entire coefficient vector $\tilde{\bm \beta}$ converges. The convergence is theoretically guaranteed due to \eqref{eq:QM} and the convexity of the loss function.

To obtain the full solution path with a sequence of $\lambda$'s while reduce the computation cost, we follow the warm start approach and adopt the strong rule \citep{tibshirani2012strong} that helps with pre-screening the signals. The implementation follows the algorithm in \citet{sherwood2022quantile}. The proposed algorithm is implemented in the CRAN package rqPen \citep{rqPen}.

\section{Simulation} \label{sec:simulation}
We conduct various numerical experiments that demonstrate the advantage of our proposed method in estimating quantile regression at multiple quantiles. We compare our proposed method (gQ-Lasso) to quantile regression with Lasso penalty (rq-Lasso) \citep{qr_lasso} and the globally concerned quantile regression (GCQR) \citep{zheng2015globally}. The rq-Lasso estimates the model identify important variables for each given quantile, while GCQR jointly estimates multiple quantiles and identifies the set of globally relevant variables over a set of quantiles by utilizing a weighted $L_1$ penalty. Specifically, let $\bm \tilde{\beta}^j$ be initial estimates of the $j$th coefficients by using quantile regression with lasso. GCQR minimizes,
\begin{equation}
	\label{gcqr}
	\frac{1}{n}\sum_{k=1}^{K}\sum_{i=1}^{n}\rho_{\tau_k}(y_i- \bm x_i^\top \bm \beta_k)+\lambda \sum_{j=1}^{p} \frac{1}{\|\bm \tilde{\beta}^j\|_{\infty}}\sum_{k=1}^K |\beta_{jk}|.
\end{equation}
\citet{zheng2015globally} propose two other weights to be used in an adaptive lasso-type penalty. 
Note that neither of the two alternatives guarantees consistent selection across all quantiles, although GCQR encourages so. In addition to creating unnecessary complexity in interpretation, such inconsistency, as we demonstrate, can cause severe crossing quantiles. Nevertheless, we are aware some existing methods that are designed for consistent variable selection. For instance, \citet{ZOU20085296} proposed to use an infinity norm penalty and framed the problem to a linear programming. At the end of this section, we show that such an approach suffers from significant computation burden.

We generate data from the following model: $Y=0.5X_1-X_2+1.5X_3-2X_4 +\epsilon$,
where $X_1 \sim Pois(2)$; $X_2\sim U(1,5)$; and the rest $X \sim N(0, \Sigma)$, where the off-diagonal elements of $\Sigma$ decay as $0.3^{|i-j|}$. We consider five different types of error:
\begin{enumerate}
	\item (normal error): $\epsilon^{(1)} \sim N(0,1)$;
	\item (heavy-tailed error): $\epsilon^{(2)} \sim t(2)$;
	\item (error with heteroscedasticity): $\epsilon^{(3)}=\bm \zeta^\top \bm x\epsilon^*$, where $\bm \zeta=(1,1,1.5,0,0,...,0)^\top \in \mathbb{R}^{p+1}$, $\epsilon^* \sim N(0,1)$ and the first entry of $\bm x$ is the constant term $1$;
	\item (asymmetric error): $\epsilon^{(4)} \sim \chi^2(3)-3$;
	\item (heteroscedastic and asymmetric): $\epsilon^{(3)}+\epsilon^{(4)}$.
\end{enumerate}
Note that the effective true coefficients varies across different quantiles. Consider the general location-scale model \eqref{locScale}. For a given $\tau_k$ the true coefficients can be written as $\bm \beta_{k}^*=\bm \beta^*+\bm \zeta F^{-1}(\tau_k)$, where $F(\cdot)$ is the cumulative distribution function of $\epsilon$. Throughout our simulation, we estimate models for $\tau \in \{0.1, 0.3, 0.5, 0.7, 0.9\}$.  

We consider different sample sizes for which $N \in \{200, 500, 1000\}$ and $p \in \{20, 50, 100\}$. For each scenario, we generate 100 random samples and assess the model performance with respect to the following metrics: (1) average model error (ME) across quantiles, where ME for quantile $\tau_k$ is defined as $\text{ME}_{k}=(\bm \beta_k^*-\hat{\bm \beta}_k)^\top \bm X^\top \bm X (\bm \beta_k^*-\hat{\bm \beta}_k)$; (2) the number of cross quantiles (NCQ) in fitted values, which is defined as $\sum_{i=1}^{N} (1-I(\hat{y}_{i,\tau_1} \le \hat{y}_{i,\tau_2} \le \cdots \le \hat{y}_{i,\tau_k}))$; (3) average false positive (FP); and (4) false negative (FN) of variable selection across quantiles. The optimal value of $\lambda$ for each method is chosen via a 5-fold cross-validation that minimizes the respective objective function.

Figure \ref{fig:simuME} and \ref{fig:simuNCQ} compare the proposed gQ-Lasso with rq-Lasso and GCQR with respect to model error (ME) and the number of cross quantiles (NCQ). For ME, the proposed gQ-Lasso outperforms rq-Lasso across almost all simulation settings, while GCQR delivers better performance under homoscedastic errors but underperforms gQ-Lasso when the data has more complex errors (e.g., the heteroscedastic plus asymmetric error) and the sample size is relatively small. As the sample size $N$ increases, the ME decreases for all methods, and the difference becomes negligible. 
As expected, the ME increases as the data become noisier. 
Regarding crossing-quantiles, measured by NCQ, Figure \ref{fig:simuNCQ} depicts large discrepancies between gQ-Lasso and the other two methods. One can see that the issue of crossing-quantile for rq-Lasso is very severe when sample size is moderate. GCQR has smaller number of crossing quantiles than rq-Lasso under homoscedastic errors but it becomes the worst when heteroscedasticity is introduced to the data. These results provide evidences that the crossing quantile issue can be largely addressed by the proposed gQ-Lasso due to the property of consistent selection across quantiles. 

\begin{figure}[h!]
	\centering
	\includegraphics[scale=0.6]{./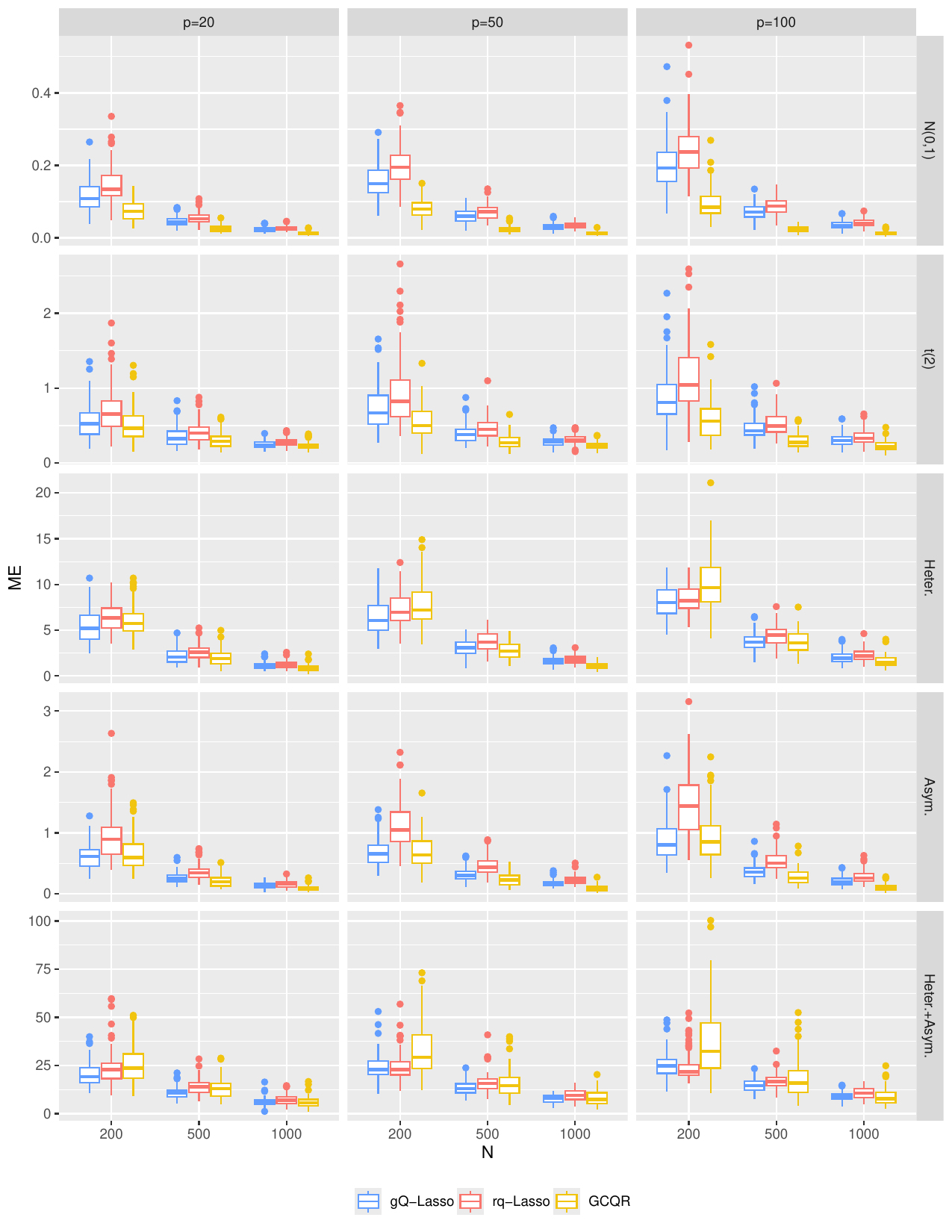}
	\caption{Boxplot of model error (ME) of gQ-Lasso and rq-Lasso} \label{fig:simuME}
\end{figure}

\begin{figure}[h!]
	\centering
	\includegraphics[scale=0.6]{./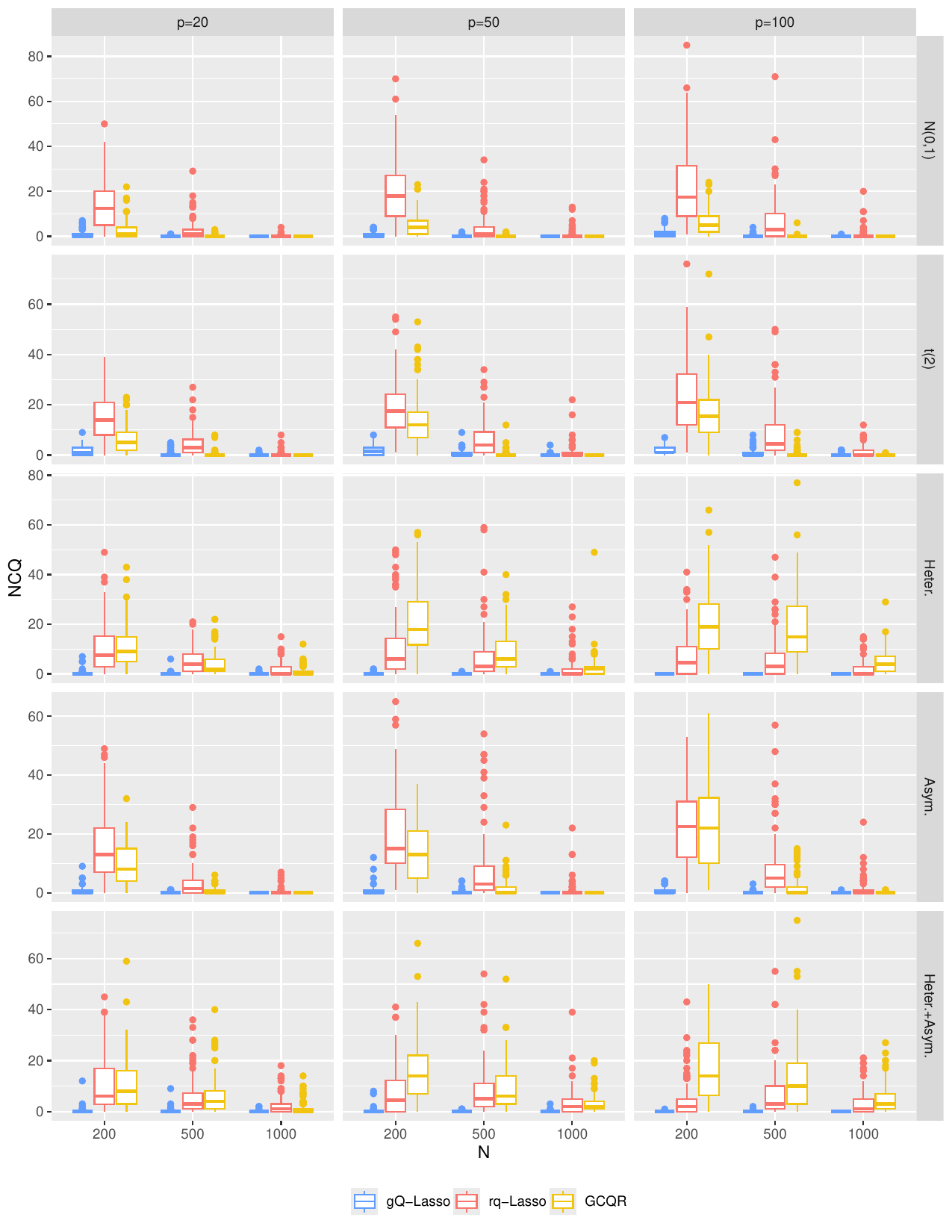}
	\caption{Boxplot of the number of cross-quantiles (NCQ) of gQ-Lasso and rq-Lasso} \label{fig:simuNCQ}
\end{figure}

The results regarding variable selection in terms of false positive (FP) and false negative (FN) are reported in Table S1 and S2 in the supplemental material. In general gQ-Lassso tends to pick a larger model due to the group penalty, and thus has more false positives but less false negatives than the other methods.  

We also compare gQ-Lasso with the approach from \citet{ZOU20085296} which we call multi-quantile regression with $L_\infty$ penalty (mrq-$L_\infty$).  Such an infinity norm penalty facilitates the estimator to achieve consistent variable selection. Specifically, their estimator is 
\begin{align*}
	\hat{\bm \beta}=\argmin_{\bm \beta}\sum_{k=1}^{K}\sum_{i=1}^{n}\rho_{\tau_k}(y_i- \bm x_i^\top \bm \beta_k)+\lambda \|\bm \beta\|_{1\infty},
\end{align*}
where $\|\bm \beta\|_{1\infty}=\sum_{j=1}^{p}\max_k{|\beta_{jk}|}$. This optimization problem is formulated as a linear programming, which we show in Table \ref{tab:comparison-time} that it incurs a substantial computational burden. Due to this reason, the comparison was conducted only based on 10 replicates for $N\in \{200, 500\}$ and $p \in {20, 100}$. Results for model error and number of crossing-quantiles are reported in Table S3 and S4 in supplemental materials. The results show that mrq-$L_\infty$ slightly outperforms gQ-Lasso under $p=20$, but substantially worse for $p=100$.

\begin{table}[htbp]
	\centering
	\caption{Computational time (second) of gQ-Lasso and mrq-$L_\infty$ based on 10 replicates.}
	\label{tab:comparison-time}
	\resizebox{\textwidth}{!}{%
		\footnotesize
		\begin{tabular}{llcccccccccc}
			\toprule
			& & \multicolumn{2}{c}{\( N(0,1) \)} & \multicolumn{2}{c}{\( t(2) \)} & \multicolumn{2}{c}{Heter.} & \multicolumn{2}{c}{Asym.} & \multicolumn{2}{c}{Heter.+Asym.} \\
			\cmidrule(lr){3-4} \cmidrule(lr){5-6} \cmidrule(lr){7-8} \cmidrule(lr){9-10} \cmidrule(lr){11-12}
			$N$ & $p$ & {gQ-Lasso} & {mrq-$L_\infty$} & {gQ-Lasso} & {mrq-$L_\infty$} & {gQ-Lasso} & {mrq-$L_\infty$} & {gQ-Lasso} & {mrq-$L_\infty$} & {gQ-Lasso} & {mrq-$L_\infty$} \\
			\midrule				 
			200 & 20 & 5.99 & 258.32 & 10.06 & 255.65 & 15.41 & 248.59 & 10.23 & 264.77 & 16.81 & 248.39 \\
			& & \textit{(0.57)} & \textit{(4.27)} & \textit{(1.05)} & \textit{(3.12)} & \textit{(1.18)} & \textit{(4.88)} & \textit{(1.22)} & \textit{(20.15)} & \textit{(0.73)} & \textit{(2.69)} \\
			& 100 & 34.30 & 2084.20 & 51.98 & 2067.93 & 92.87 & 1960.75 & 63.77 & 2064.94 & 98.43 & 1768.16 \\
			& & \textit{(4.68)} & \textit{(42.27)} & \textit{(5.16)} & \textit{(26.08)} & \textit{(6.30)} & \textit{(72.55)} & \textit{(4.35)} & \textit{(104.84)} & \textit{(5.60)} & \textit{(46.13)} \\
			\midrule
			500 & 20 & 14.95 & 1265.16 & 27.40 & 1267.83 & 43.83 & 1250.25 & 32.31 & 1342.90 & 61.54 & 1265.15 \\
			& & \textit{(2.79)} & \textit{(22.64)} & \textit{(3.30)} & \textit{(25.82)} & \textit{(8.20)} & \textit{(22.89)} & \textit{(3.28)} & \textit{(58.13)} & \textit{(5.96)} & \textit{(32.49)} \\
			& 100 & 38.70 & 6254.49 & 93.34 & 6196.01 & 161.59 & 5304.61 & 105.50 & 5795.98 & 190.03 & 5233.57 \\
			& & \textit{(4.50)} & \textit{(267.17)} & \textit{(17.32)} & \textit{(218.20)} & \textit{(13.41)} & \textit{(100.94)} & \textit{(11.93)} & \textit{(139.21)} & \textit{(13.08)} & \textit{(135.51)} \\
			\bottomrule
		\end{tabular}
	}
\end{table}

Overall, our simulation studies demonstrate the advantage of the proposed gQ-Lasso in ME, NCQ and FN, while the results of FP indicate that gQ-Lasso tends to over select variables, which is expected due to the property of consistent selection across quantiles. Such a property significantly mitigates the issue of crossing-quantiles which can otherwise lead to confusing and unreliable quantile predictions. The comparison between gQ-Lasso and mrq-$L_\infty$ demonstrates that gQ-Lasso offers a significant advantage in computational efficiency. 

\section{Equity Premium Prediction}\label{sec:empirical}

\subsection{Data}
We re-examine an important finance problem, equity premium predictability, using our proposed method, and compare with several alternative approaches. 
We focus on the one-month-ahead return prediction, where the response variable is the monthly excess return of the S\&P 500 index, defined as the value weighted return with dividends minus the risk-free rate. For predictors, we directly use the market and economic variables from \citet{goyal2023comprehensive} who has made the data publicly available.\footnote{\url{https://drive.google.com/file/d/17eRrT9yVYmnS-ReJJJhN2RY2llldpLK7/view?usp=drive_link}} Among all 46 predictors, we select only monthly-recorded variables and further exclude those with more than 20\% missing values. This results in 24 predictors and $N=1141$ observations spanning from December 1926 to December 2021. We lag predictors by a month so that the estimated model is a month-ahead predictive model. Table \ref{tab:summarystat} provides a brief description of each variable and their summary statistics. Following exploratory analysis, we take logarithm transformations of \textit{stock variance (svar)}, \textit{default yield spread (dfy)}, and \textit{return dispersion (rdsp)}.

\begin{table}[h]
	\centering
	\caption{Variable description and summary statistics} \label{tab:summarystat}
		\footnotesize
		\begin{tabular}{llrrrrr}
			\hline
			Variable & Description & Mean & Std & Min & Median & Max\\
			\hline
			ExRet & Excess return of S\&P500 index & 0.007 & 0.054 & -0.288 & 0.010 & 0.414\\
			avgcor & average correlation of daily return & 0.284 & 0.121 & 0.035 & 0.267 & 0.708 \\ 
			b/m & Book to market & 0.554 & 0.269 & 0.121 & 0.531 & 2.028 \\ 
			d/e & Dividend payout ratio & -0.643 & 0.326 & -1.244 & -0.633 & 1.380 \\	 
			d/p & Dividend price ratio & -3.401 & 0.473 & -4.524 & -3.371 & -1.873 \\ 
			d/y & Dividend yield & -3.396 & 0.470 & -4.531 & -3.362 & -1.913 \\
			dfr & Default return spread & 0.000 & 0.014 & -0.098 & 0.000 & 0.074 \\
			dfy & Default yield spread & 0.011 & 0.007 & 0.003 & 0.009 & 0.056 \\ 		
			dtoat & nearness to Dow lifetime high & 0.785 & 0.236 & 0.112 & 0.889 & 1.000 \\
			dtoy & nearness to Dow 52-week high & 0.919 & 0.105 & 0.276 & 0.957 & 1.000 \\ 
			e/p & Earning price ratio & -2.758 & 0.421 & -4.836 & -2.808 & -1.775 \\	
			fbm & single factor from B/M cross-section & 0.173 & 0.206 & -1.418 & 0.169 & 0.693 \\								
			infl & Inflation & 0.002 & 0.005 & -0.021 & 0.002 & 0.059 \\ 
			ltr & Long term return & 0.005 & 0.025 & -0.112 & 0.003 & 0.152 \\
			lty & Long term yield & 0.050 & 0.028 & 0.006 & 0.041 & 0.148 \\
			lzrt & 9 illiquidity measures & -1.695 & 0.335 & -4.722 & -1.668 & -0.962 \\  
			ntis & Net equity expansion & 0.016 & 0.026 & -0.056 & 0.016 & 0.177 \\ 
			ogap & output gap of industrial production & -0.004 & 0.133 & -0.562 & -0.009 & 0.384 \\ 
			rdsp & return dispersion & 0.039 & 0.029 & 0.015 & 0.031 & 0.420 \\	
			skvw & average stock skewness & 0.033 & 0.054 & -0.379 & 0.036 & 0.398 \\
			svar & Stock variance & 0.003 & 0.006 & 0.000 & 0.001 & 0.073 \\ 			
			tail & tail-risk from cross-section & 0.425 & 0.058 & 0.286 & 0.438 & 0.589 \\
			tbl & Treasury-bill rate & 0.033 & 0.031 & 0.000 & 0.028 & 0.163 \\		
			tms & Term spread & 0.017 & 0.013 & -0.036 & 0.017 & 0.045 \\ 				
			wtexas & oil price change & 0.005 & 0.075 & -0.542 & 0.000 & 0.884 \\ 
			\hline
		\end{tabular}
\end{table}

\subsection{Model estimation based on full sample}
Our empirical analysis employs several existing methods for model estimation and variable selection, including both quantile and mean regressions for benchmark. Specifically, we compare three quantile regressions: (1) the proposed method (gQ-Lasso), (2) quantile regression with Lasso penalty (rq-Lasso), and (3) the globally concerned quantile regression with adaptive Lasso penalty (GCQR), and three mean regressions: (4) ordinary least square with stepwise selection using AIC (stepAIC), (5) classical Lasso model (Lasso), and (6) Huber regression with Lasso penalty (H-Lasso). For quantile methods, we model 9 quantiles where $\tau \in \{0.1, 0.2, ..., 0.9\}$. We use the 10-fold cross-validation to determine the optimal $\lambda$ values for the all penalized methods. 

We use heat maps as shown in Figure \ref{fig:coefplot} to show the estimated model coefficients and compare them across different models that are fitted based on the full sample period (1926/12 -- 2021/12). Unselected variables are marked using solid black color while nonzero coefficients are represented using color scale from blue to red. Comparing across three penalized quantile methods, the advantage of the proposed gQ-Lasso is that a variable is either in or out across all quantiles, so that it provides a coherent and easy-to-interpret model. In contrast, for rq-Lasso and GCQR, variables often jump in and out of the model across quantiles without any clear patterns. This is extremely inconvenient for practitioners to interpret the model. For instance, in GCQR, the variable ``dtoat" has a positive effect at 0.1 quantile, then the effect disappear from 0.2 to 0.5, and the signal goes back and forth afterwards. Note that the idea of GCQR is to recover the true signals in nearby quantiles as the effect is often expected to remain the same in nearby quantiles. Yet, the results of GCQR shown in Figure \ref{fig:coefplot} do not align with its motivation. Several variables such as \textit{avgcor}, \textit{dfr}, \textit{fbm}, and \textit{infl} are among the strongest predictive variables that are agreed across all methods, while others, such as \textit{dtoy}, \textit{e/p}, \textit{skvw}, and \textit{svar} show strong signals under quantile regressions but none of them are picked by the three mean regressions. The variable \textit{b/m} shows a very strong predictive power under OLS and Lasso, but it is unselected by Huber-Lasso and a majority of the quantile regression models. This is potentially due to extreme values as Huber and quantile regressions are known for their robustness. Another interesting finding is on variable \textit{log(svar)}, which is ignored by all three mean regressions but picked by quantile regressions as it clearly shows strong effect on lower and upper quantiles as shown in Figure \ref{fig:motivation2}.

\begin{figure}[h!]
	\centering
	\includegraphics[scale=0.5]{./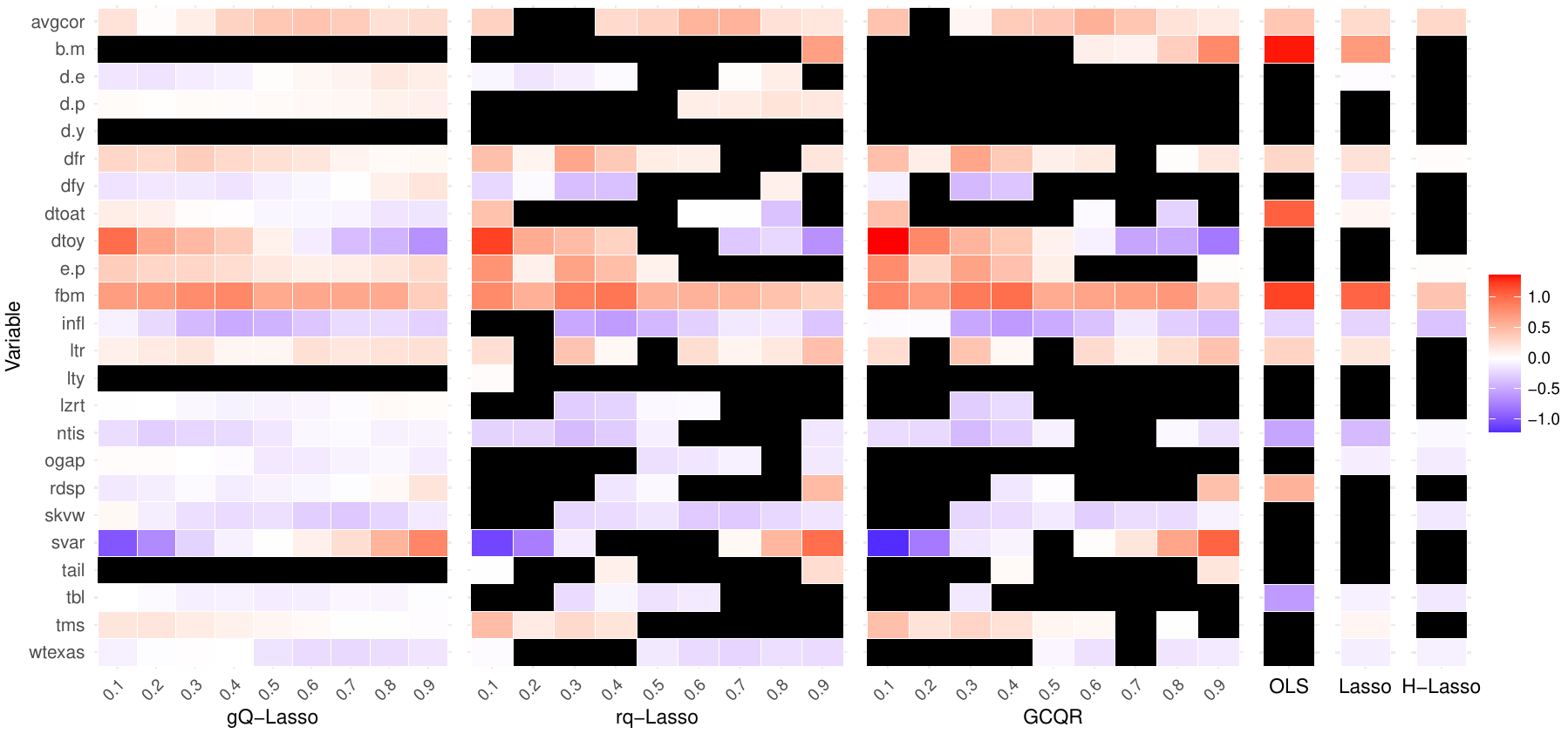}
	\caption{Heatmap of estimated coefficients. Black entries represent a coefficient being set to zero.} \label{fig:coefplot}
\end{figure}

To take into account selection uncertainty, we conduct 1000 bootstrap sampling and compute the percentage of selection for each variable across each method. These results are reported in Table S5 in the supplemental material. 

\subsection{Assessing in- and out-of-sample prediction}	
We examine the predictability of excess return via several measures under both full sample (in-sample) and out-of-sample analysis. The latter is conducted through the standard expanding window forecast approach. Specifically, we use all available data up to time stamp $T$ to estimate the model and predict the excess return at time $T+1$, then re-estimate the model using data up to time $T+1$ and predict the excess return at time $T+2$, and keep it going till the end of the prediction window. 
To assess the prediction accuracy for both mean and quantiles. For mean prediction, we follow the literature \citep{welch2008comprehensive,rapach2010out} and compute the $R^2$ for both in-sample and out-of-sample predictions. The in-sample $R^2$ is defined as
\begin{align}
	R^2_{IS}=1-\frac{\sum_{t=1}^{N}(Y_t-\hat{Y}_t)^2}{\sum_{t=1}^{N}(Y_t-\bar{Y})^2}, \label{eq:RsqIS}
\end{align}
where $\hat{Y}_t$ is the predicted return at time $t$ and $\bar{Y}$ is the average of return computed from the full sample. Additionally, the out-of-sample $R^2$ is defined as
\begin{align}
	R^2_{OOS}=1-\frac{\sum_{OS_1\le t\le OS_n}(Y_t-\hat{Y}_t)^2}{\sum_{OS_1\le t\le OS_n}(Y_t-\bar{\bar{Y}}_{t-1})^2}, \label{eq:RsqOOS}
\end{align}
where $OS_1$ and $OS_N$ represent the first and last time stamp of the prediction period and $\bar{\bar{Y}}_{t-1}$ is the historic mean return up to time point $t-1$. A positive $R^2_{OOS}$ implies that the model prediction beats historic mean. To obtain the mean prediction from quantile regressions, we compute the equal-weighted average across all quantile predictions. Hence, the $R^2$ defined in \eqref{eq:RsqIS} and \eqref{eq:RsqOOS} can be directly applied.

To assess quantile predictions, we modify \eqref{eq:RsqIS} and \eqref{eq:RsqOOS} by first replacing $\bar{Y}$ and $\bar{\bar{Y}}$ with $\bar{Y}^{(\tau)}$ and $\bar{\bar{Y}}^{(\tau)}$, respectively, where the latter indicates historic $\tau$th quantile; and then replacing the square error with the quantile check loss \citep{qrgoodfit}. We denote $R^2_{mean}$ and $R^2_{\tau}$ as the R squares for mean and quantile predictions. Quantile prediction from mean regressions are obtained by assuming the errors are normally distributed with variance being that of the residuals. 

In addition to prediction accuracy measured by $R^2$, we also evaluate the accuracy of sign prediction, which is crucial for investor's decision-making. Quantile regression at the median has a particular advantage in assessing sign prediction. This is because the median is not necessarily equivalent to the mean, and $P(Y>0)\ge 0.5$ if $\hat{Y}_{\tau=0.5}>0$ and $P(Y>0)< 0.5$ otherwise. Define the Percentage Error of Sign Prediction (PES) as the proportion of incorrect sign predictions. For quantile regression, PES is computed as $\frac{1}{n}\sum_{i=1}^{n}I[\text{sign}(Y_i)\ne\text{sign}(\hat{Y}_{i,\tau=0.5})]$, and for mean regression, $\hat{Y}_{i,\tau=0.5}$ is replaced by the predicted value $\hat{Y}_i$.

Last but not the least, we assess the issue of cross-quantile prediction for the two quantile regression approaches. We compute the percentage of cross quantile predictions (PCQ), which is defined as $\text{PCQ}=\frac{1}{N}\sum_{i=1}^{N} \left[1-I(\hat{Y}_{i,\tau_1} \le \hat{Y}_{i,\tau_2} \le \cdots \le \hat{Y}_{i,\tau_k})\right]$. A higher PCQ indicates more problematic predictions, which suggests an abnormal behavior of the model, and investors should avoid.

Table \ref{tab:perform1} shows results for the full sample analysis with 1000 bootstrap where the standard errors are displayed in the parentheses. Firstly, among the three quantile regression approaches, gQ-Lasso significantly outperforms the rq-Lasso and GCQR in terms of PCQ, with the latter two being approximately 3-4 times higher. This result aligns with expectations and is consistent with findings from our simulation studies. It provides further evidence, supported by real data analysis, that consistent variable selection across quantiles can effectively mitigate the issue of quantile crossing. Furthermore, the proposed gQ-Lasso achieves the lowest PES among all methods, significantly outperforming the three mean regression approaches.
Regarding $R^2$, the three quantile regression approaches exhibit significantly higher values for quantile predictions ($R^2_{\tau}$) but lower values for mean predictions ($R^2_{mean}$) compared to the squared-error-based methods (stepAIC and Lasso). This is not surprising as $R^2_{mean}$ and $R^2_{\tau}$ correspond to the squared and check loss, respectively, which are the primary objective functions of the mean and quantile regressions in our analysis. The Huber-Lasso performs worse than all other methods in terms of $R^2$.

\begin{table}[h]
	\centering
	\caption{In-sample $R^2$ for mean and quantile predictions, percentage error of sign prediction (PES) and percentage crossing-quantile (PCQ). Bootstrap standard errors are in parentheses.} \label{tab:perform1}
	\resizebox{\textwidth}{!}{%
		\footnotesize
		\begin{tabular}{lrrrrrr}
			\hline
			& gQ-Lasso & rq-Lasso & GCQR & stepAIC & Lasso & H-Lasso\\
			\hline
			$R^2_{mean}$ & 0.074 \textit{(0.025)} & 0.071 \textit{(0.022)} & 0.074 \textit{(0.024)} & 0.108 \textit{(0.031)} & 0.101 \textit{(0.039)} & 0.049 \textit{(0.024)}\\
			\hline
			$R^2_{\tau}$ &&&&&&\\
			\cline{1-2}
			$\tau=0.1$ & 0.161 \textit{(0.030)} & 0.168 \textit{(0.028)} & 0.165 \textit{(0.029)} & 0.040 \textit{(0.032)} & 0.033 \textit{(0.032)} & 0.010 \textit{(0.031)} \\
			$0.2$ & 0.094 \textit{(0.020)} & 0.093 \textit{(0.022)} & 0.092 \textit{(0.023)} & 0.024 \textit{(0.024)} & 0.018 \textit{(0.024)} & 0.009 \textit{(0.022)} \\
			$0.3$ & 0.067 \textit{(0.016)} & 0.066 \textit{(0.017)} & 0.065 \textit{(0.016)} & 0.022 \textit{(0.019)} & 0.019 \textit{(0.018)} & 0.020 \textit{(0.016)} \\
			$0.4$ & 0.053 \textit{(0.013)} & 0.050 \textit{(0.015)} & 0.050 \textit{(0.014)} & 0.020 \textit{(0.016)} & 0.020 \textit{(0.015)} & 0.034 \textit{(0.014)} \\
			$0.5$ & 0.046 \textit{(0.011)} & 0.043 \textit{(0.013)} & 0.043 \textit{(0.013)} & 0.016 \textit{(0.014)} & 0.019 \textit{(0.013)} & 0.042 \textit{(0.013)} \\
			$0.6$ & 0.051 \textit{(0.012)} & 0.048 \textit{(0.011)} & 0.048 \textit{(0.012)} & 0.017 \textit{(0.014)} & 0.023 \textit{(0.013)} & 0.030 \textit{(0.012)} \\
			$0.7$ & 0.063 \textit{(0.015)} & 0.059 \textit{(0.013)} & 0.059 \textit{(0.015)} & 0.014 \textit{(0.015)} & 0.020 \textit{(0.014)} & 0.012 \textit{(0.014)} \\
			$0.8$ & 0.090 \textit{(0.019)} & 0.088 \textit{(0.017)} & 0.088 \textit{(0.018)} & 0.006 \textit{(0.020)} & 0.011 \textit{(0.019)} & -0.011 \textit{(0.020)} \\
			$0.9$ & 0.146 \textit{(0.029)} & 0.148 \textit{(0.029)} & 0.147 \textit{(0.030)} & 0.005 \textit{(0.030)} & 0.005 \textit{(0.030)} & -0.037 \textit{(0.027)} \\
			\hline
			PES & 0.362 \textit{(0.015)} & 0.363 \textit{(0.018)} & 0.366 \textit{(0.017)} & 0.591 \textit{(0.021)} & 0.403 \textit{(0.022)} & 0.635 \textit{(0.017)} \\
			PCQ  & 3.956 \textit{(2.978)} & 15.665 \textit{(5.239)} & 13.113 \textit{(4.804)} & - & - & - \\
			\hline
		\end{tabular}
	}
\end{table}

Moving to out-of-sample prediction, we compute the same metrics as in Table \ref{tab:perform1}. We set the full forecasting window from January 1965 to December 2021, and we use the expanding window approach to predict the monthly excess return one at a time. Additionally, we examine for several sub-prediction windows including 1965-1972, 1976-2007, and 2010-2019. These periods are chosen by excluding major recessions, e.g., the Oil Shock during 1973-1975, the great recession during 2008-2009, and COVID-19 in 2020, which have brought substantial market volatility. Table \ref{tab:perform2} shows the results for quantile regression (Panel A) and mean regression (Panel B) for the four different evaluation periods. Note that rq-Lasso and GCQR have significantly higher PCQs than our proposed method across all four evaluation periods. To further illustrate this issue, Figure \ref{fig:cq} displays the predicted excess returns generated by gQ-Lasso and rq-Lasso across all nine quantiles during the period 1965-1972. It is evident that rq-Lasso frequently exhibits line-crossing over time, whereas gQ-Lasso predictions show significantly fewer instances of such crossings. This example highlights that quantile predictions using rq-Lasso may be problematic.

\begin{table}[h]
	\centering
	\caption{Out-of-sample comparison for multiple prediction periods.} \label{tab:perform2}
	\resizebox{\textwidth}{!}{%
		\footnotesize
		\begin{tabular}{lrrrrrrrrrrrr}
			\hline
			&\multicolumn{3}{c}{1965-2021} & \multicolumn{3}{c}{1965-1972} & \multicolumn{3}{c}{1976-2007} & \multicolumn{3}{c}{2010-2019}\\
			\cline{2-4} \cline{5-7} \cline{8-10} \cline{11-13}
			&\multicolumn{12}{c}{Panel A: Quantile regression}\\
			& gQ-Lasso & rq-Lasso & GCQR & gQ-Lasso & rq-Lasso & GCQR & gQ-Lasso & rq-Lasso & GCQR & gQ-Lasso & rq-Lasso & GCQR\\ 
			\hline			
			$R^2_{mean}$ & -0.003 & -0.003 & -0.008 & 0.038 & 0.023 & 0.034 & 0.034 & 0.031 & 0.022 & 0.008 & 0.010 & 0.015\\ 
			\hline
			\multicolumn{2}{l}{$R^2_{\tau}$} &&&&&&&&&&&\\
			\cline{1-2}
			$\tau=0.1$      & 0.049 & 0.030 & 0.034 & 0.033 & 0.052 & 0.054 & 0.056 & 0.039 & 0.040 & 0.057 & 0.026 & -0.001 \\
			0.2      & 0.008 & 0.002 & 0.002 & 0.018 & 0.010 & 0.002 & 0.016 & 0.010 & 0.003 & 0.025 & 0.017 & 0.023 \\
			0.3      & -0.005 & -0.006 & -0.012 & 0.007 & -0.001 & 0.009 & 0.017 & 0.007 & 0.002 & -0.016 & 0.004 & 0.004 \\
			0.4      & -0.002 & 0.003 & -0.003 & 0.014 & -0.005 & 0.023 & 0.016 & 0.015 & 0.003 & -0.026 & -0.021 & -0.027 \\
			0.5      & 0.002 & 0.005 & 0.004 & 0.042 & 0.016 & 0.032 & 0.010 & 0.010 & 0.010 & -0.013 & -0.010 & -0.004 \\
			0.6      & 0.016 & 0.013 & 0.010 & 0.041 & 0.032 & 0.045 & 0.026 & 0.019 & 0.019 & -0.003 & -0.003 & -0.013 \\
			0.7      & 0.023 & 0.025 & 0.021 & 0.044 & 0.050 & 0.046 & 0.027 & 0.029 & 0.021 & 0.025 & 0.017 & 0.024 \\
			0.8      & 0.059 & 0.055 & 0.044 & 0.131 & 0.138 & 0.100 & 0.040 & 0.040 & 0.032 & 0.083 & 0.080 & 0.062 \\
			0.9      & 0.101 & 0.042 & 0.066 & 0.218 & 0.042 & 0.121 & 0.060 & 0.019 & 0.036 & 0.185 & 0.142 & 0.155 \\
			\hline 
			PES & 41.520 & 41.080 & 42.250 & 44.790 & 43.750 & 45.830 & 41.930 & 42.190 & 42.450 & 29.170 & 29.170 & 29.170 \\ 
			PCQ & 1.900 & 31.870 & 22.080 & 0.000 & 29.170 & 16.670 & 2.080 & 40.620 & 28.120 & 0.830 & 6.670 & 10.830 \\ 
			\hline
			&\multicolumn{12}{c}{Panel B: Mean regression}\\
			& stepAIC & Lasso & H-Lasso & stepAIC & Lasso & H-Lasso & stepAIC & Lasso & H-Lasso & stepAIC & Lasso & H-Lasso \\ 
			\hline
			$R^2_{mean}$ & -0.128 & -0.031 & -0.014 & -0.104 & 0.027 & 0.022 & -0.136 & -0.017 & 0.014 & -0.024 & 0.009 & -0.040\\
			\hline
			\multicolumn{2}{l}{$R^2_{\tau}$} &&&&&&&&&&&\\
			\cline{1-2}
			$\tau=0.1$ & -0.102 & -0.077 & -0.054 & -0.184 & -0.180 & -0.116 & -0.141 & -0.090 & -0.068 & 0.038 & -0.031 & -0.005 \\ 
			0.2 & -0.115 & -0.087 & -0.051 & -0.133 & -0.110 & -0.041 & -0.174 & -0.115 & -0.071 & 0.033 & -0.053 & -0.028 \\ 
			0.3 & -0.098 & -0.061 & -0.030 & -0.146 & -0.084 & -0.028 & -0.147 & -0.080 & -0.040 & 0.039 & -0.027 & -0.025 \\ 
			0.4 & -0.071 & -0.032 & 0.001 & -0.131 & -0.033 & 0.012 & -0.096 & -0.038 & 0.002 & -0.001 & -0.027 & -0.023 \\ 
			0.5 & -0.059 & -0.025 & 0.007 & -0.038 & 0.030 & 0.033 & -0.078 & -0.034 & 0.012 & -0.038 & -0.018 & -0.010 \\ 
			0.6 & -0.062 & -0.013 & -0.005 & -0.025 & 0.042 & 0.004 & -0.065 & -0.010 & 0.009 & -0.090 & -0.015 & -0.044 \\ 
			0.7 & -0.075 & -0.015 & -0.040 & -0.057 & 0.030 & -0.080 & -0.053 & -0.007 & -0.012 & -0.170 & -0.045 & -0.132 \\ 
			0.8 & -0.073 & -0.049 & -0.087 & -0.048 & -0.082 & -0.192 & -0.032 & -0.019 & -0.054 & -0.220 & -0.110 & -0.186 \\ 
			0.9 & -0.084 & -0.078 & -0.139 & -0.050 & -0.147 & -0.245 & -0.048 & -0.047 & -0.120 & -0.219 & -0.133 & -0.232 \\ 
			\hline
			PES & 44.440 & 47.660 & 40.200 & 53.120 & 50.000 & 42.710 & 45.570 & 47.920 & 41.410 & 30.000 & 35.000 & 29.170 \\
			\hline
		\end{tabular}
	}
\end{table}

\begin{figure}[h!]
	\centering
	\includegraphics[scale=0.7]{./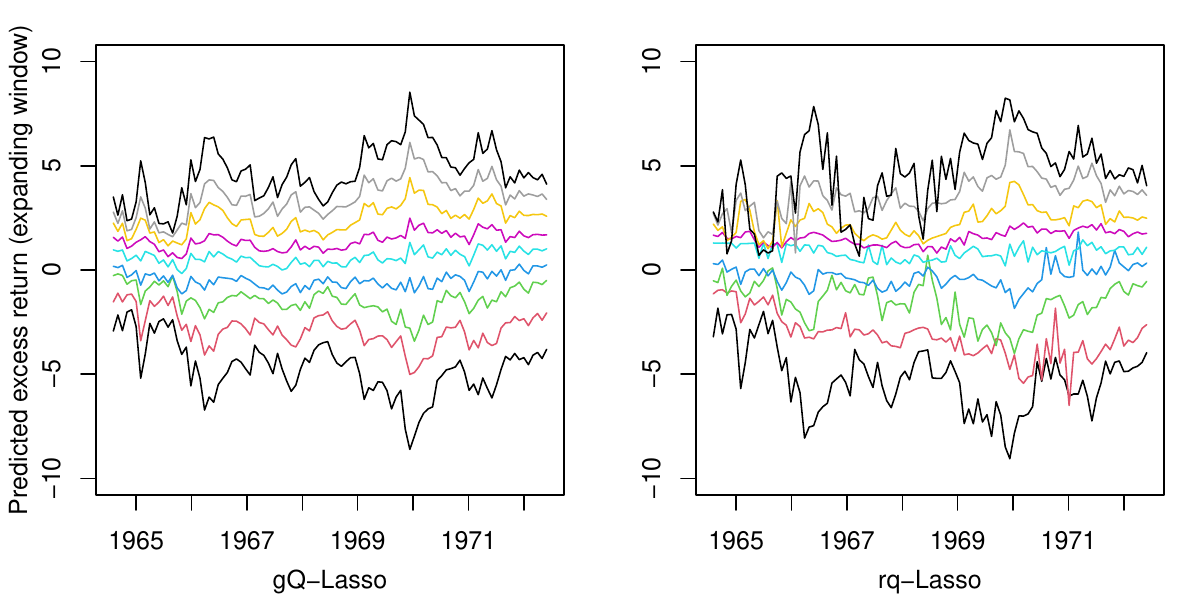}
	\caption{Quantile predictions based on gQ-Lasso and rq-Lasso method} \label{fig:cq}
\end{figure}

Regarding the prediction accuracy, the proposed gQ-Lasso generally outperforms rq-Lasso and GCQR in both $R^2_{mean}$ and $R^2_{\tau}$, while their PES are very similar. 
When compared to the two mean regression, all three quantile methods show noticeable superiority across all evaluation periods and all measures, even including $R^2_{mean}$, which is theoretically advantageous to the mean regression as discussed previously. More importantly, majority of $R^2$'s are positive for the quantile regressions but negative for the mean regressions. Such a finding provides strong empirical evidences that the excess return is hard to predict using traditional mean regression possibly due to heteroscedasticity. Quantile regressions could be better alternatives as suggested by their superior performance.

We provide further evidences in prediction interval and coverage probability to show that the use of mean regression in return prediction can be problematic. In Table \ref{tab:perform3}, we compute the average length and its standard deviation of 80\% and 60\% prediction intervals (IL80 and IL60) and their coverage probabilities (CP80 and CP60), where the interval length is defined as $\hat{Y}_{\tau_k}-\hat{Y}_{\tau_k'}$ for $\tau_k>\tau_k'$, and the coverage probability is computed as  $\frac{1}{N}\sum_{i=1}^{N}I(\hat{Y}_{i,\tau_k'}<Y_i<\hat{Y}_{i,\tau_k})$. We see that the average length of prediction intervals are much smaller for quantile regressions while its standard deviation is much larger, suggesting potential heteroscedasticity. For the coverage probability (CP), gQ-Lasso is the closest to the target interval while the three mean regressions substantially deviates from the target. These findings echo Figure \ref{fig:motivation1} which has motivated our study. 
\begin{table}[h]
	\centering
	\caption{Interval length and coverage probability for out-of-sample prediction.} \label{tab:perform3}
		\footnotesize
		\begin{tabular}{lcccccc}
			\hline
			& IL80 (avg.) & IL80 (sd.) & IL60 (avg.) & IL60 (sd.) & CP80 & CP60 \\ 
			\hline
			gQ-Lasso & 10.280 & 3.297 & 6.830 & 2.295 & 0.809 & 0.616 \\ 
			rq-Lasso & 10.351 & 3.799 & 7.014 & 2.420 & 0.769 & 0.634 \\ 
			GCQR & 10.481 & 3.846 & 7.023 & 2.414 & 0.790 & 0.622 \\
			stepAIC & 14.321 & 0.684 & 9.405 & 0.449 & 0.892 & 0.736 \\
			Lasso & 14.563 & 0.818 & 9.564 & 0.537 & 0.904 & 0.766 \\ 				
			H-Lasso & 14.831 & 0.901 & 9.740 & 0.592 & 0.907 & 0.782 \\ 
			\hline
		\end{tabular}
\end{table}

\section{Conclusion}	

We motivate our work by observing strong evidences of heteroscedasticity in financial and economic data commonly used for equity premium prediction -- an active research area in empirical finance. We advocate for the use of quantile regression and develop a novel penalized quantile regression method that achieves consistent variable selection across multiple quantiles. Our approach provides a comprehensive analysis and produces more reliable predictions over the entire spectrum of quantiles. By ensuring consistent variable selection, our method yields a coherent and highly interpretable model, and effectively mitigates the well-known issue of quantile crossing. We show that under standard quantile regression assumptions, the estimator obtains the $\sqrt{\frac{s \log(p)}{n}}$ convergence rate that is common among lasso-type estimators. 

In our empirical analysis, we demonstrate that the proposed method provides several advantages over  alternative approaches under both mean and quantile regressions frameworks. 
First, compared to the mean regressions, the proposed method delivers superior out-of-sample performance in both $R^2$ and the sign prediction accuracy. Moreover, modeling multiple quantiles through quantile regressions in general can provide a full spectrum of the distribution estimation, which in turn allows practitioners and researchers to conduct more comprehensive examinations. As demonstrated by our empirical results, we found that several important variables have changed the sign of coefficients from lower to upper quantiles due to heteroscedastic relationship, which otherwise cannot be unveiled by any mean regressions.
Second, for variable selection, the proposed method demonstrates a clear advantage over two popular Lasso-based quantile regressions, rq-Lasso and GCQR; both selects different variables at different quantiles, creating unnecessary complexity for interpretation. 
Third, we show that the proposed approach significantly reduces the issue of crossing quantile predictions when multiple quantiles are modeled, compared to the two alternative quantile methods. This potentially produces more reliable model estimation and predictions, which have been evidenced by both simulation and the empirical study of the equity premium prediction. 
With these advantages, the quantile predictions using the proposed method outperforms alternative methods for multiple periods of out-of-sample evaluation. Evidence of return predictability is shown under quantile regression approaches when major recession periods are excluded, while the conclusions are mixed in existing studies \citep{campbell2008predicting,welch2008comprehensive,goyal2024comprehensive}, which rely on the OLS framework.
While we showcase our method through an important empirical finance problem, assessing equity premium predictability, our method can be broadly adopted to other fields and applications where the interests are modeling multiple quantiles with a large set of covariates.

Although the proposed method significantly mitigates the issue of quantile crossing, it does not completely eliminate the problem. Alternative approaches, such as those by \citet{bondell2010noncrossing} and \citet{cqQu}, incorporate constraints into the optimization process to strictly prevent crossing in the fitted values; however, such constraints have not yet been explored within the framework of penalized quantile regression.
Nevertheless, in our empirical analysis, we implement the constrained quantile regression (constrQR) method proposed by \citet{bondell2010noncrossing} to the equity premium prediction problem. Table S6 in the supplemental materials compares its out-of-sample performance with the proposed gQ-Lasso. Although constrQR successfully prevents quantile crossing, it underperforms the gQ-Lasso approach across all other out-of-sample prediction metrics. Notably, constrQR yields negative $R^2_{mean}$ values for all evaluation periods, indicating that predictions from constrQR are no better than the historical average.
A notable advantage of these constraints is that the problem remains a linear programming problem, which can be solved using standard quantile regression algorithms. Nevertheless, even without constraints, these algorithms have been demonstrated to be computationally slow for penalized quantile regression \citep{yi2017semismooth, tanSmooth}. Introducing constraints would further exacerbate this computational inefficiency. Thus, developing a fast algorithm to integrate these constraints into penalized quantile regression represents a promising direction for future research.

\bibliographystyle{apalike}
\bibliography{ref}

\end{document}


\def\spacingset#1{\renewcommand{\baselinestretch}%
{#1}\small\normalsize} \spacingset{1}

  \title{\bf Supplemental Material to Quantile Predictions for Equity Premium using Penalized Quantile Regression with Consistent Variable Selection across Multiple Quantiles}
  \author{Shaobo Li\\
    School of Business, University of Kansas\\
    and \\
    Ben Sherwood\\
    School of Business, University of Kansas}
  \maketitle

\section{Additional results}

\subsection{Additional results from simulation study}
Table \ref{tab:simu1} and \ref{tab:simu2} present the false positive and false negative rates from the simulations. From Table \ref{tab:simu1}, we see that in general, gQ-Lasso tends to select a larger models than rq-Lasso and GCQR, where the latter has a much lower FP. This is not surprising as gQ-Lasso may pick false signals at certain quantiles then keep them across all quantiles. On the other hand, rq-Lasso and GCQR do not guarantee the same sparsity across all quantiles, so that potential false signals may only appear at certain quantiles. Furthermore, the low FP of GCQR is likely due to its adaptive-Lasso penalty, which satisfies the oracle property. We also observe that compared to homoscedastic error, gQ-Lasso and rq-Lasso have smaller FP under the settings with heteroscedastic error, but an opposite pattern is observed for GCQR. As for FN, Table \ref{tab:simu2} shows that gQ-Lasso substantially outperforms the other two methods especially for settings with complex errors. For the normal error, none of the three methods miss correct signals over the entire 100 replicates. 

Table \ref{tab:comparison-ME} and \ref{tab:comparison-NCQ} report model error and number of crossing-quantiles for comparison between the proposed method and mrq-$L_\infty$. Results are based on 10 replicates due to extreme computational time of mrq-$L_\infty$ method. The results show that mrq-$L_\infty$ slightly outperforms gQ-Lasso under $p=20$, but substantially worse for $p=100$.
	
\begin{table}[h!]
	\centering
	\caption{Average false positive (standard error) for variable selection} \label{tab:simu1}
	\resizebox{\textwidth}{!}{%
		\begin{tabular}{llccccccccccc}
			\hline
			$\epsilon$ & $N$ &  \multicolumn{3}{c}{p=20} && \multicolumn{3}{c}{p=50} && \multicolumn{3}{c}{p=100} \\
			\cline{3-5}  \cline{7-9} \cline{11-13}  
			& & gQ-Lasso & rq-Lasso & GCQR && gQ-Lasso & rq-Lasso & GCQR && gQ-Lasso & rq-Lasso & GCQR\\
			\hline
			N(0,1) & 200 & 8.730 & 6.754 & 1.306 && 13.360 & 10.552 & 2.364 && 15.840 & 14.048 & 3.308 \\ 
			& & \textit{(3.440)} & \textit{(2.067)} & \textit{(1.049)} && \textit{(5.885)} & \textit{(3.598)} & \textit{(1.318)} && \textit{(8.147)} & \textit{(5.677)} & \textit{(2.058)} \\ 
			& 500 & 8.860 & 6.148 & 0.330 && 12.630 & 9.402 & 0.408 && 17.110 & 13.052 & 0.520 \\ 
			& & \textit{(3.490)} & \textit{(1.784)} & \textit{(0.438)} && \textit{(5.863)} & \textit{(3.502)} & \textit{(0.549)} && \textit{(9.056)} & \textit{(4.888)} & \textit{(0.613)} \\ 
			& 1000 & 8.930 & 5.958 & 0.034 && 13.100 & 9.442 & 0.018 && 18.210 & 12.798 & 0.030 \\ 
			& & \textit{(3.400)} & \textit{(1.986)} & \textit{(0.095)} && \textit{(5.342)} & \textit{(3.327)} & \textit{(0.064)} && \textit{(9.524)} & \textit{(3.662)} & \textit{(0.096)} \\ 
			\hline
			$T_2$ &  200 & 8.650 & 6.568 & 2.390 && 11.940 & 9.942 & 4.406 && 15.120 & 13.804 & 5.930 \\ 
			& & \textit{(3.304)} & \textit{(2.095)} & \textit{(1.460)} && \textit{(5.750)} & \textit{(3.027)} & \textit{(2.155)} && \textit{(8.086)} & \textit{(5.098)} & \textit{(2.795)} \\ 
			& 500 & 9.180 & 6.362 & 1.016 && 13.780 & 9.824 & 1.598 && 16.050 & 12.268 & 1.940 \\ 
			& & \textit{(3.705)} & \textit{(1.955)} & \textit{(0.716)} && \textit{(6.458)} & \textit{(3.346)} & \textit{(1.119)} && \textit{(8.281)} & \textit{(4.348)} & \textit{(1.390)} \\ 
			& 1000 & 9.220 & 6.132 & 0.214 && 14.020 & 9.834 & 0.396 && 16.470 & 11.928 & 0.422 \\ 
			& & \textit{(3.386)} & \textit{(1.722)} & \textit{(0.310)} && \textit{(6.627)} & \textit{(3.377)} & \textit{(0.514)} && \textit{(8.241)} & \textit{(4.027)} & \textit{(0.479)} \\ 
			\hline
		  heter. & 200 & 5.590 & 4.224 & 3.078 && 6.700 & 5.450 & 5.562 && 6.400 & 4.834 & 6.882 \\ 
			& & \textit{(3.838)} & \textit{(2.419)} & \textit{(1.776)} && \textit{(5.608)} & \textit{(3.458)} & \textit{(2.912)} && \textit{(5.735)} & \textit{(3.373)} & \textit{(4.093)} \\ 
			& 500 & 7.980 & 5.964 & 3.226 && 10.480 & 7.906 & 5.424 && 13.370 & 9.272 & 8.996 \\ 
			& & \textit{(3.687)} & \textit{(2.058)} & \textit{(1.771)} && \textit{(5.654)} & \textit{(3.360)} & \textit{(2.392)} && \textit{(7.396)} & \textit{(4.275)} & \textit{(3.858)} \\ 
			& 1000 & 8.620 & 5.764 & 2.358 && 11.640 & 8.752 & 4.770 && 14.490 & 11.600 & 8.062 \\ 
			& & \textit{(3.487)} & \textit{(1.939)} & \textit{(1.445)} && \textit{(6.028)} & \textit{(3.083)} & \textit{(2.433)} && \textit{(6.722)} & \textit{(4.098)} & \textit{(3.533)} \\  
			\hline
			asym. &   200 & 7.170 & 6.230 & 3.030 && 11.080 & 10.056 & 5.104 && 13.790 & 13.446 & 8.360 \\ 
			&& \textit{(3.414)} & \textit{(2.060)} & \textit{(1.734)} && \textit{(6.147)} & \textit{(3.845)} & \textit{(2.897)} && \textit{(7.263)} & \textit{(5.854)} & \textit{(4.208)} \\ 
			&500 & 8.260 & 6.352 & 1.872 && 12.870 & 10.338 & 2.914 && 15.400 & 13.212 & 4.304 \\ 
			&& \textit{(3.529)} & \textit{(1.964)} & \textit{(1.393)} && \textit{(7.894)} & \textit{(4.061)} & \textit{(1.885)} && \textit{(8.962)} & \textit{(4.803)} & \textit{(2.669)} \\ 
			&1000 & 8.030 & 6.038 & 0.786 && 12.360 & 9.734 & 1.032 && 14.610 & 11.612 & 1.310 \\ 
			&& \textit{(3.597)} & \textit{(1.912)} & \textit{(0.747)} && \textit{(6.994)} & \textit{(3.750)} & \textit{(0.908)} && \textit{(7.256)} & \textit{(4.176)} & \textit{(0.939)} \\ 
			\hline
			heter. + asym. & 200 & 4.250 & 3.314 & 2.416 && 4.860 & 4.006 & 4.800 && 3.580 & 3.496 & 5.269 \\ 
			&& \textit{(3.514)} & \textit{(1.909)} & \textit{(1.529)} && \textit{(5.472)} & \textit{(2.953)} & \textit{(3.154)} && \textit{(5.101)} & \textit{(2.915)} & \textit{(3.631)} \\ 
			& 500 & 5.850 & 3.782 & 2.698 && 8.500 & 6.206 & 4.538 && 8.680 & 6.384 & 6.474 \\ 
			&&  \textit{(3.380)} & \textit{(1.692)} & \textit{(1.670)} && \textit{(6.414)} & \textit{(3.516)} & \textit{(2.797)} && \textit{(6.903)} & \textit{(4.122)} & \textit{(4.044)} \\ 
			& 1000 & 7.170 & 4.508 & 2.294 && 9.940 & 6.772 & 4.220 && 11.830 & 7.384 & 5.274 \\ 
			&&  \textit{(3.447)} & \textit{(1.911)} & \textit{(1.311)} && \textit{(5.517)} & \textit{(2.703)} & \textit{(2.203)} && \textit{(6.784)} & \textit{(3.594)} & \textit{(3.317)} \\ 
			\hline
		\end{tabular}
	}
\end{table}

\begin{table}[h!]
	\centering
	\caption{Average false negative (standard error) for variable selection} \label{tab:simu2}
	\resizebox{\textwidth}{!}{%
		\begin{tabular}{lccccccccccc}
			\hline
			$N$ &  \multicolumn{3}{c}{p=20} && \multicolumn{3}{c}{p=50} && \multicolumn{3}{c}{p=100} \\
			\cline{2-4}  \cline{6-8} \cline{10-12}  
			& gQ-Lasso & rq-Lasso & GCQR && gQ-Lasso & rq-Lasso & GCQR && gQ-Lasso & rq-Lasso & GCQR\\
			\hline
			\multicolumn{3}{l}{Error: $\epsilon\sim N(0,1)$}&&&&&&&&&\\
			\cline{1-2}
			200 & 0.000 & 0.000 & 0.000 && 0.000 & 0.000 & 0.000 && 0.000 & 0.000 & 0.000 \\ 
			& \textit{(0.000)} & \textit{(0.000)} & \textit{(0.000)} && \textit{(0.000)} & \textit{(0.000)} & \textit{(0.000)} && \textit{(0.000)} & \textit{(0.000)} & \textit{(0.000)} \\ 
			500 & 0.000 & 0.000 & 0.000 && 0.000 & 0.000 & 0.000 && 0.000 & 0.000 & 0.000 \\ 
			& \textit{(0.000)} & \textit{(0.000)} & \textit{(0.000)} && \textit{(0.000)} & \textit{(0.000)} & \textit{(0.000)} && \textit{(0.000)} & \textit{(0.000)} & \textit{(0.000)} \\ 
			1000 & 0.000 & 0.000 & 0.000 && 0.000 & 0.000 & 0.000 && 0.000 & 0.000 & 0.000 \\ 
			& \textit{(0.000)} & \textit{(0.000)} & \textit{(0.000)} && \textit{(0.000)} & \textit{(0.000)} & \textit{(0.000)} && \textit{(0.000)} & \textit{(0.000)} & \textit{(0.000)} \\ 			
			\hline
			\multicolumn{3}{l}{Error: $\epsilon\sim t(2)$}&&&&&&&&&\\
			\cline{1-2}
			200 & 0.000 & 0.030 & 0.040 && 0.000 & 0.064 & 0.040 && 0.000 & 0.076 & 0.038 \\ 
			& \textit{(0.000)} & \textit{(0.072)} & \textit{(0.090)} && \textit{(0.000)} & \textit{(0.124)} & \textit{(0.085)} && \textit{(0.000)} & \textit{(0.123)} & \textit{(0.079)} \\ 
			500 & 0.000 & 0.006 & 0.006 && 0.000 & 0.000 & 0.000 && 0.000 & 0.004 & 0.002 \\ 
			& \textit{(0.000)} & \textit{(0.034)} & \textit{(0.034)} && \textit{(0.000)} & \textit{(0.000)} & \textit{(0.000)} && \textit{(0.000)} & \textit{(0.028)} & \textit{(0.020)} \\ 
			1000 & 0.000 & 0.000 & 0.000 && 0.000 & 0.000 & 0.000 && 0.000 & 0.000 & 0.000 \\ 
			& \textit{(0.000)} & \textit{(0.000)} & \textit{(0.000)} && \textit{(0.000)} & \textit{(0.000)} & \textit{(0.000)} && \textit{(0.000)} & \textit{(0.000)} & \textit{(0.000)} \\  			
			\hline
			\multicolumn{3}{l}{Error: heteroskedasticity}&&&&&&&&&\\
			\cline{1-2}
			200 & 0.500 & 1.616 & 1.154 && 0.810 & 2.066 & 1.158 && 1.620 & 2.674 & 1.640 \\ 
			& \textit{(0.835)} & \textit{(0.783)} & \textit{(0.667)} && \textit{(1.022)} & \textit{(0.724)} & \textit{(0.783)} && \textit{(1.179)} & \textit{(0.601)} & \textit{(0.869)} \\ 
			500 & 0.000 & 0.486 & 0.188 && 0.020 & 0.830 & 0.242 && 0.040 & 1.094 & 0.256 \\ 
			& \textit{(0.000)} & \textit{(0.283)} & \textit{(0.220)} && \textit{(0.141)} & \textit{(0.428)} & \textit{(0.238)} && \textit{(0.197)} & \textit{(0.460)} & \textit{(0.318)} \\ 
			1000 & 0.000 & 0.260 & 0.096 && 0.000 & 0.404 & 0.114 && 0.010 & 0.490 & 0.094 \\ 
			& \textit{(0.000)} & \textit{(0.149)} & \textit{(0.115)} && \textit{(0.000)} & \textit{(0.175)} & \textit{(0.137)} && \textit{(0.100)} & \textit{(0.189)} & \textit{(0.146)} \\ 
			\hline
			\multicolumn{3}{l}{Error: asymmetric}&&&&&&&&&\\
			\cline{1-2}
			200 & 0.000 & 0.130 & 0.096 && 0.000 & 0.202 & 0.056 && 0.000 & 0.330 & 0.084 \\ 
			& \textit{(0.000)} & \textit{(0.194)} & \textit{(0.143)} && \textit{(0.000)} & \textit{(0.227)} & \textit{(0.121)} && \textit{(0.000)} & \textit{(0.281)} & \textit{(0.137)} \\ 
			500 & 0.000 & 0.010 & 0.012 && 0.000 & 0.014 & 0.010 && 0.000 & 0.026 & 0.004 \\ 
			& \textit{(0.000)} & \textit{(0.044)} & \textit{(0.048)} && \textit{(0.000)} & \textit{(0.051)} & \textit{(0.044)} && \textit{(0.000)} & \textit{(0.073)} & \textit{(0.028)} \\ 
			1000 & 0.000 & 0.000 & 0.002 && 0.000 & 0.008 & 0.004 && 0.000 & 0.004 & 0.002 \\ 
			& \textit{(0.000)} & \textit{(0.000)} & \textit{(0.020)} && \textit{(0.000)} & \textit{(0.039)} & \textit{(0.028)} && \textit{(0.000)} & \textit{(0.028)} & \textit{(0.020)} \\ 
			\hline
			\multicolumn{3}{l}{Error: heter. + asym.}&&&&&&&&&\\
			\cline{1-2}
			200 & 1.180 & 2.250 & 1.770 && 1.950 & 2.716 & 2.014 && 2.440 & 2.960 & 2.176 \\ 
			& \textit{(0.989)} & \textit{(0.654)} & \textit{(0.670)} && \textit{(1.067)} & \textit{(0.506)} & \textit{(0.630)} && \textit{(0.903)} & \textit{(0.399)} & \textit{(0.663)} \\ 
			500 & 0.220 & 1.520 & 1.036 && 0.540 & 1.882 & 1.010 && 0.780 & 2.086 & 0.904 \\ 
			& \textit{(0.462)} & \textit{(0.520)} & \textit{(0.546)} && \textit{(0.784)} & \textit{(0.511)} & \textit{(0.686)} && \textit{(0.949)} & \textit{(0.468)} & \textit{(0.652)} \\ 
			1000 & 0.010 & 0.952 & 0.490 && 0.070 & 1.202 & 0.514 && 0.080 & 1.516 & 0.456 \\ 
			& \textit{(0.100)} & \textit{(0.449)} & \textit{(0.431)} && \textit{(0.293)} & \textit{(0.453)} & \textit{(0.405)} && \textit{(0.273)} & \textit{(0.410)} & \textit{(0.413)} \\ 	
			\hline
		\end{tabular}
	}
\end{table}

\begin{table}[htbp]
	\centering
	\caption{Model error (ME) of gQ-Lasso and mrq-$L_\infty$ based on 10 replicates.}
	\label{tab:comparison-ME}
	\resizebox{\textwidth}{!}{%
		\begin{tabular}{llcccccccccc}
			\toprule
			& & \multicolumn{2}{c}{\( N(0,1) \)} & \multicolumn{2}{c}{\( t(2) \)} & \multicolumn{2}{c}{Heter.} & \multicolumn{2}{c}{Asym.} & \multicolumn{2}{c}{Heter.+Asym.} \\
			\cmidrule(lr){3-4} \cmidrule(lr){5-6} \cmidrule(lr){7-8} \cmidrule(lr){9-10} \cmidrule(lr){11-12}
			$N$ & $p$ & {gQ-Lasso} & {mrq-$L_\infty$} & {gQ-Lasso} & {mrq-$L_\infty$} & {gQ-Lasso} & {mrq-$L_\infty$} & {gQ-Lasso} & {mrq-$L_\infty$} & {gQ-Lasso} & {mrq-$L_\infty$} \\
			\midrule 
			200 & 20 & 0.122 & 0.085 & 0.597 & 0.396 & 5.564 & 5.740 & 0.607 & 0.360 & 22.476 & 28.636 \\
			& & \textit{(0.049)} & \textit{(0.026)} & \textit{(0.164)} & \textit{(0.095)} & \textit{(1.383)} & \textit{(1.447)} & \textit{(0.155)} & \textit{(0.127)} & \textit{(7.527)} & \textit{(10.698)} \\
			& 100 & 0.247 & 0.653 & 0.848 & 1.658 & 8.578 & 37.820 & 0.845 & 3.249 & 26.904 & 161.467 \\
			& & \textit{(0.078)} & \textit{(0.096)} & \textit{(0.422)} & \textit{(0.478)} & \textit{(1.601)} & \textit{(4.884)} & \textit{(0.218)} & \textit{(0.713)} & \textit{(9.034)} & \textit{(33.371)} \\
			\midrule 
			500 & 20 & 0.041 & 0.033 & 0.276 & 0.191 & 2.197 & 2.199 & 0.242 & 0.140 & 10.737 & 9.733 \\
			& & \textit{(0.005)} & \textit{(0.007)} & \textit{(0.063)} & \textit{(0.040)} & \textit{(0.834)} & \textit{(0.630)} & \textit{(0.078)} & \textit{(0.050)} & \textit{(1.928)} & \textit{(2.618)} \\
			& 100 & 0.080 & 0.327 & 0.512 & 1.002 & 3.664 & 17.363 & 0.435 & 1.774 & 13.893 & 96.919 \\
			& & \textit{(0.032)} & \textit{(0.061)} & \textit{(0.118)} & \textit{(0.197)} & \textit{(1.210)} & \textit{(2.225)} & \textit{(0.149)} & \textit{(0.230)} & \textit{(2.470)} & \textit{(15.578)} \\
			\bottomrule
		\end{tabular}
	}
\end{table}

\begin{table}[htbp]
	\centering
	\caption{Number of crossing quantiles (NCQ) of gQ-Lasso and mrq-$L_\infty$ based on 10 replicates.}
	\label{tab:comparison-NCQ}
	\resizebox{\textwidth}{!}{%
		\begin{tabular}{llcccccccccc}
			\toprule
			& & \multicolumn{2}{c}{\( N(0,1) \)} & \multicolumn{2}{c}{\( t(2) \)} & \multicolumn{2}{c}{Heter.} & \multicolumn{2}{c}{Asym.} & \multicolumn{2}{c}{Heter.+Asym.} \\
			\cmidrule(lr){3-4} \cmidrule(lr){5-6} \cmidrule(lr){7-8} \cmidrule(lr){9-10} \cmidrule(lr){11-12}
			$N$ & $p$ & {gQ-Lasso} & {mrq-$L_\infty$} & {gQ-Lasso} & {mrq-$L_\infty$} & {gQ-Lasso} & {mrq-$L_\infty$} & {gQ-Lasso} & {mrq-$L_\infty$} & {gQ-Lasso} & {mrq-$L_\infty$} \\
			\midrule
			200 & 20 & 1.4 & 0.6 & 2.7 & 0.5 & 0.2 & 1.4 & 0.6 & 0.1 & 0.3 & 5.4 \\
			& & \textit{(1.5)} & \textit{(1.1)} & \textit{(2.3)} & \textit{(1.3)} & \textit{(0.6)} & \textit{(2.2)} & \textit{(0.8)} & \textit{(0.3)} & \textit{(0.9)} & \textit{(4.8)} \\
			& 100 & 1.5 & 102.7 & 2.8 & 107.0 & 0.0 & 101.8 & 0.7 & 103.3 & 0.0 & 96.4 \\
			& & \textit{(2.7)} & \textit{(12.0)} & \textit{(1.8)} & \textit{(8.9)} & \textit{(0.0)} & \textit{(13.6)} & \textit{(0.9)} & \textit{(17.8)} & \textit{(0.0)} & \textit{(9.7)} \\
			\midrule
			500 & 20 & 0.0 & 0.0 & 0.5 & 0.0 & 0.0 & 0.1 & 0.0 & 0.1 & 0.0 & 0.3 \\
			& & \textit{(0.0)} & \textit{(0.0)} & \textit{(1.0)} & \textit{(0.0)} & \textit{(0.0)} & \textit{(0.3)} & \textit{(0.0)} & \textit{(0.3)} & \textit{(0.0)} & \textit{(0.7)} \\
			& 100 & 0.7 & 143.1 & 0.8 & 157.2 & 0.0 & 139.1 & 0.2 & 162.4 & 0.0 & 167.0 \\
			& & \textit{(1.3)} & \textit{(24.2)} & \textit{(0.8)} & \textit{(11.9)} & \textit{(0.0)} & \textit{(22.5)} & \textit{(0.6)} & \textit{(13.3)} & \textit{(0.0)} & \textit{(16.1)} \\
			\bottomrule
		\end{tabular}
	}
\end{table}
\subsection{Additional results from empirical analysis}

To take into account selection uncertainty, we conduct 1000 bootstrap sampling and compute the percentage of selection for each variable across each method. The result is reported in Table \ref{tab:realdatacoef}. These results, which indicate the strength of each variable.	
	The bootstrap results also reflect model size. Compared to the three mean regressions, gQ-Lasso on average picks the largest model with almost every predictor exceeding 50\% selection rate. Although we do not know the true model, all the examined predictors have been documented and shown evidence of predictivity for stock return. As for the rq-Lasso and GCQR, it is not straightforward to evaluate the model size, thus not comparable to other methods. 

\begin{sidewaystable}[h]
	\centering
	\caption{\footnotesize Comparison of variable selection frequency (\%) based on 1000 bootstrap for different modeling approaches based on the full sample period of 1926 -- 2021.} \label{tab:realdatacoef}
	\resizebox{\textwidth}{!}{%
		\begin{tabular}{llccccccccccccccccccccc}
			\hline
			Variable & Proposed & \multicolumn{9}{c}{rq-Lasso (for each $\tau$)} & \multicolumn{9}{c}{GCQR (for each $\tau$)} & stepAIC & Lasso & H-Lasso\\
			\cline{3-11} \cline{12-20}
			&& 0.1 & 0.2 & 0.3 & 0.4 & 0.5 & 0.6 & 0.7 & 0.8 & 0.9 & 0.1 & 0.2 & 0.3 & 0.4 & 0.5 & 0.6 & 0.7 & 0.8 & 0.9 &&&\\
			\hline
			avgcor & 86.7 & 56.3 & 57.1 & 81.4 & 97.6 & 99.6 & 96.4 & 81.9 & 68.8 & 83.4 & 54.9 & 58.5 & 80.8 & 94.1 & 96.9 & 87.9 & 72.0 & 56.2 & 99.9 & 61.3 & 83.6 & 96.8 \\ 
			b.m & 53.6 & 50.2 & 43.4 & 38.6 & 35.7 & 44.4 & 45.6 & 52.2 & 72.5 & 50.7 & 47.8 & 48.8 & 44.0 & 46.0 & 51.0 & 50.9 & 53.8 & 62.5 & 77.9 & 90.0 & 87.8 & 31.5 \\ 
			d.e & 56.6 & 71.3 & 61.5 & 46.7 & 29.7 & 37.0 & 44.9 & 55.3 & 39.9 & 48.3 & 50.1 & 48.9 & 41.2 & 29.0 & 29.2 & 28.8 & 34.9 & 26.3 & 95.1 & 1.2 & 55.2 & 32.2 \\ 
			d.p & 5.9 & 13.8 & 11.3 & 13.6 & 28.0 & 41.5 & 46.3 & 57.7 & 44.5 & 12.5 & 20.1 & 18.0 & 16.9 & 23.5 & 29.9 & 29.8 & 35.6 & 30.1 & 50.0 & 67.8 & 31.9 & 22.4 \\ 
			d.y & 40.0 & 40.8 & 31.6 & 26.3 & 27.6 & 21.6 & 18.2 & 16.6 & 24.8 & 41.6 & 40.6 & 38.8 & 34.9 & 36.1 & 34.2 & 31.5 & 34.5 & 33.6 & 55.6 & 65.4 & 33.2 & 23.7 \\ 
			dfr & 92.1 & 82.9 & 92.5 & 88.7 & 80.0 & 71.1 & 54.6 & 60.0 & 54.9 & 83.5 & 72.2 & 83.4 & 82.5 & 75.8 & 68.4 & 54.3 & 54.3 & 51.2 & 98.1 & 67.2 & 81.7 & 78.8 \\ 
			dfy & 90.9 & 81.6 & 83.0 & 80.7 & 59.5 & 48.7 & 41.8 & 57.6 & 63.3 & 77.5 & 71.7 & 75.2 & 76.3 & 63.7 & 57.6 & 43.6 & 48.1 & 55.5 & 98.0 & 82.2 & 73.3 & 59.9 \\ 
			dtoat & 78.0 & 67.7 & 52.6 & 45.4 & 46.1 & 47.1 & 47.8 & 66.5 & 52.0 & 66.4 & 52.6 & 46.9 & 41.8 & 43.9 & 46.0 & 40.9 & 50.7 & 40.9 & 93.1 & 67.6 & 61.5 & 40.0 \\ 
			dtoy & 98.4 & 93.1 & 87.7 & 76.3 & 53.7 & 55.1 & 78.4 & 80.9 & 89.4 & 97.7 & 90.5 & 89.8 & 81.9 & 70.9 & 72.5 & 86.1 & 90.0 & 93.0 & 100.0 & 61.4 & 66.9 & 52.6 \\ 
			e.p & 74.6 & 55.2 & 75.8 & 76.4 & 57.3 & 38.8 & 26.9 & 22.4 & 27.6 & 54.0 & 44.8 & 55.4 & 54.0 & 42.1 & 36.4 & 28.9 & 26.5 & 26.2 & 96.0 & 33.6 & 45.1 & 58.2 \\ 
			fbm & 95.2 & 98.1 & 98.7 & 98.1 & 97.3 & 98.0 & 98.6 & 95.0 & 71.4 & 95.3 & 97.1 & 98.5 & 98.4 & 98.5 & 99.4 & 99.0 & 97.7 & 86.9 & 100.0 & 99.9 & 98.2 & 97.4 \\ 
			infl & 78.7 & 81.9 & 96.6 & 97.9 & 96.5 & 94.6 & 81.9 & 78.4 & 80.4 & 66.1 & 72.5 & 91.2 & 94.5 & 92.5 & 89.7 & 74.9 & 75.3 & 76.3 & 100.0 & 66.7 & 86.1 & 96.1 \\ 
			ltr & 83.7 & 83.9 & 82.3 & 68.2 & 70.9 & 90.9 & 83.2 & 94.8 & 94.6 & 69.7 & 71.1 & 69.3 & 57.2 & 61.7 & 82.0 & 73.7 & 85.4 & 83.2 & 99.1 & 61.5 & 84.0 & 66.7 \\ 
			lty & 52.3 & 21.0 & 15.8 & 25.4 & 31.1 & 35.6 & 38.9 & 44.1 & 37.1 & 30.5 & 15.5 & 16.4 & 18.0 & 21.3 & 24.6 & 22.0 & 24.3 & 16.2 & 69.4 & 48.2 & 13.9 & 28.6 \\ 
			lzrt & 64.4 & 59.9 & 71.4 & 76.5 & 74.5 & 72.5 & 57.6 & 61.8 & 67.5 & 33.4 & 35.3 & 49.5 & 56.8 & 59.7 & 56.8 & 40.8 & 43.0 & 46.7 & 97.9 & 18.3 & 57.4 & 72.8 \\ 
			ntis & 85.7 & 89.8 & 90.1 & 84.6 & 75.7 & 56.8 & 48.8 & 57.3 & 58.3 & 73.0 & 76.6 & 78.4 & 75.2 & 68.9 & 58.0 & 48.2 & 58.2 & 57.9 & 99.0 & 84.7 & 90.4 & 75.4 \\ 
			ogap & 70.5 & 60.6 & 62.0 & 66.0 & 78.7 & 85.6 & 64.5 & 63.7 & 68.3 & 44.6 & 39.7 & 41.3 & 44.5 & 60.9 & 67.6 & 47.2 & 45.7 & 49.7 & 97.2 & 29.1 & 73.2 & 76.8 \\ 
			rdsp & 61.2 & 57.9 & 60.9 & 71.2 & 63.7 & 53.7 & 44.3 & 53.8 & 85.2 & 48.9 & 44.5 & 49.9 & 57.5 & 55.7 & 50.6 & 38.9 & 43.2 & 64.0 & 96.3 & 45.1 & 63.1 & 62.4 \\ 
			skvw & 72.8 & 71.3 & 79.4 & 86.9 & 85.3 & 92.3 & 93.7 & 93.3 & 68.3 & 53.6 & 55.0 & 69.2 & 79.1 & 75.0 & 82.6 & 81.5 & 79.2 & 56.1 & 100.0 & 46.9 & 73.5 & 83.5 \\ 
			svar & 100.0 & 94.2 & 68.3 & 47.6 & 41.3 & 43.0 & 65.2 & 99.1 & 100.0 & 100.0 & 96.4 & 80.3 & 64.8 & 64.8 & 70.3 & 82.4 & 99.0 & 99.8 & 100.0 & 30.1 & 50.2 & 38.1 \\ 
			tail & 76.9 & 65.1 & 63.7 & 65.5 & 60.1 & 57.7 & 55.3 & 63.8 & 80.7 & 45.8 & 39.6 & 36.8 & 38.5 & 34.9 & 34.5 & 30.9 & 37.3 & 55.3 & 95.8 & 37.6 & 62.0 & 55.2 \\ 
			tbl & 18.4 & 48.4 & 66.1 & 59.3 & 58.7 & 56.9 & 32.9 & 33.3 & 21.0 & 12.7 & 25.8 & 36.5 & 32.9 & 33.4 & 34.0 & 24.3 & 23.9 & 12.7 & 77.8 & 80.1 & 63.5 & 61.9 \\ 
			tms & 97.5 & 90.0 & 81.3 & 68.1 & 55.6 & 50.1 & 40.4 & 43.0 & 48.0 & 88.1 & 82.1 & 75.1 & 69.2 & 64.2 & 58.6 & 46.0 & 45.9 & 42.4 & 99.0 & 1.5 & 70.0 & 57.0 \\ 
			wtexas & 77.5 & 67.0 & 66.5 & 69.7 & 90.2 & 94.9 & 88.9 & 93.4 & 87.4 & 52.5 & 44.2 & 46.8 & 49.8 & 74.2 & 82.7 & 68.2 & 74.5 & 65.1 & 99.9 & 42.3 & 77.5 & 87.9 \\ 
			\hline
		\end{tabular}
	}
\end{sidewaystable}

To prevent crossing quantiles, \citet{bondell2010noncrossing} proposed a constrained quantile regression (constrQR) under fixed-dimension setup. Table \ref{tab:constrQR} compares gQ-Lasso and constrQR on out-of-sample performance for the same evaluation periods as in Table 4 in the paper. We see that, as expected, PCQ is 0 for constrQR throughout, but PES, $R^2_{mean}$ and $R^2_{\tau}$ are all in favor of the proposed gQ-Lasso.
\begin{table}[h]
\centering
\caption{Comparison between gQ-Lasso and constrQR on out-of-sample performance.} \label{tab:constrQR}
\resizebox{\textwidth}{!}{%
\begin{tabular}{lcccccccc}
\hline
&\multicolumn{2}{c}{1965-2021} & \multicolumn{2}{c}{1965-1972} & \multicolumn{2}{c}{1976-2007} & \multicolumn{2}{c}{2010-2019}\\
\cline{2-3} \cline{4-5} \cline{6-7} \cline{8-9}
& gQ-Lasso & constrQR & gQ-Lasso & constrQR & gQ-Lasso & constrQR & gQ-Lasso & constrQR \\ 
\hline			
$R^2_{mean}$ & -0.003 & -0.041 & 0.038 & -0.012 & 0.034 & -0.010 & 0.008 & -0.010 \\ 
\hline
\multicolumn{2}{l}{$R^2_{\tau}$} &&&&&&&\\
\cline{1-2}
$\tau=0.1$      & 0.049 & 0.041 & 0.033 & 0.067 & 0.056 & 0.039 & 0.057 & 0.042 \\ 
0.2      & 0.008 & -0.006 & 0.018 & 0.006 & 0.016 & -0.005 & 0.025 & 0.015 \\ 
0.3      & -0.005 & -0.010 & 0.007 & 0.014 & 0.017 & -0.001 & -0.016 & -0.017 \\ 
0.4      & -0.002 & -0.010 & 0.014 & -0.005 & 0.016 & -0.001 & -0.026 & -0.016 \\ 
0.5      & 0.002 & -0.016 & 0.042 & 0.003 & 0.010 & -0.014 & -0.013 & -0.019 \\ 
0.6      & 0.016 & -0.012 & 0.041 & -0.003 & 0.026 & -0.012 & -0.003 & -0.009 \\ 
0.7      & 0.023 & -0.004 & 0.044 & -0.018 & 0.027 & -0.005 & 0.025 & 0.016 \\ 
0.8      & 0.059 & 0.025 & 0.131 & 0.058 & 0.040 & 0.005 & 0.083 & 0.089 \\ 
0.9      & 0.101 & 0.081 & 0.218 & 0.214 & 0.060 & 0.012 & 0.185 & 0.176 \\ 
\hline 
PES & 41.520 & 44.150 & 44.790 & 45.830 & 41.930 & 47.400 & 29.170 & 29.170 \\ 
PCQ & 1.900 & 0.000 & 0.000 & 0.000 & 2.080 & 0.000 & 0.830 & 0.000 \\ 
\hline 
\end{tabular}
}
\end{table}

\clearpage

\newpage

\section{Proofs}

\subsection{Proof of Lemma 1}

\begin{proof}
	By definition,
	\begin{equation}
	\label{coneB1}
	0 \geq \frac{1}{n} \sum_{k=1}^K \sum_{i=1}^{n}   \left\{ \rho_{\tau_k}\left[y_i - \vx_i^\top \hat{\vbeta}_k\right] - \rho_{\tau_k}\left[y_i - \vx_i^\top \vbeta_{k}^*\right]\right\} + \lambda \sum_{j=1}^p  \left( ||\hat{\vbeta}^j||_2 - ||\vbeta^{*j}||_2\right).
	\end{equation}
	
	By convexity and definition of subgradients for all $k \in \{1,\ldots,K\}$, 
	\begin{equation}
	\label{coneB2}
	\rho_{\tau_k}[y_i - \vx_i^\top \hat{\vbeta}_k] - \rho_{\tau_k}\left[y_i - \vx_i^\top \vbeta_{k}^*\right] \geq - \left\{ \tau_k - I\left[y_i \leq \vx_i^\top \vbeta_{k}^*\right]\right\}\vx_i^\top[\hat{\vbeta}_k-\vbeta_{k}^*].
	\end{equation}
	Combining \eqref{coneB1} and \eqref{coneB2}
	\begin{equation}
	\label{coneB2a}
	0 \geq -\frac{1}{n} \sum_{k=1}^K  \sum_{i=1}^{n}   \left\{ \tau_k - I\left[y_i \leq \vx_i^\top \vbeta_{k}^*\right]\right\}\vx_i^\top[\hat{\vbeta}_k-\vbeta_{k}^*] + \lambda \sum_{j=1}^p  \left( ||\hat{\vbeta}^j||_2 - ||\vbeta^{*j}||_2\right).
	\end{equation}
	Using, Cauchy–Schwarz  inequality and the definitions of $\Lambda$ and $\vv_j$ it follows that 
	\begin{eqnarray*}
		\frac{1}{n} \sum_{k=1}^K  \sum_{i=1}^{n}   \left\{ \tau_k - I\left[y_i \leq \vx_i^\top \vbeta_{k}^*\right]\right\}\vx_i^\top[\hat{\vbeta}_k-\vbeta_{k}^*] 
		&=& \sum_{j=0}^p \sum_{k=1}^K \frac{1}{n} \sum_{i=1}^n x_{ij} [ \tau_k - I(\epsilon_i^k \leq 0)](\hat{\beta}_{kj}-\beta_{kj}^*) \\
		&=& \sum_{j=0}^p \vv^{j\top}(\hat{\vbeta}^j-\vbeta^{*j}) \\
		&\leq& \underset{j \in \{0,\ldots,p\}}{\max} ||\vv^j||_2 \sum_{j=0}^p ||\hat{\vbeta}^j-\vbeta^{*j}||_2 \\
		&=& \Lambda \sum_{j=0}^p ||\hat{\vbeta}^j - \vbeta^{*j}||_2.
	\end{eqnarray*}
	With $\lambda/2 \geq \Lambda$ it follows that,
	\begin{equation}
	\label{coneB3}
	\frac{1}{n} \sum_{k=1}^K  \sum_{i=1}^{n}   \left\{ \tau_k - I\left[y_i \leq \vx_i^\top \vbeta_{k}^*\right]\right\}\vx_i^\top[\hat{\vbeta}_k-\vbeta_{k}^*]
	\leq \frac{\lambda}{2} \sum_{j=0}^p ||\hat{\vbeta}^j - \vbeta^{*j}||_2. 
	\end{equation}
	Combine \eqref{coneB2a} and \eqref{coneB3} and,
	\begin{equation*}
	0 \leq \frac{\lambda}{2} \sum_{j=0}^p ||\hat{\vbeta}^j - \vbeta^{*j}||_2 + \lambda \sum_{j=1}^p  \left( ||\vbeta^{*j}||_2 - ||\hat{\vbeta}^j||_2\right).
	\end{equation*}
	Removing, $\lambda$ and adding $\frac{1}{2}\sum_{j=0}^p ||\hat{\vbeta}^j - \vbeta^{*j}||_2$ to both sides we get,
	\begin{equation}
	\label{coneB4}
	\frac{1}{2} \sum_{j=0}^p ||\hat{\vbeta}^j - \vbeta^{*j}||_2 \leq ||\hat{\vbeta}^0 - \vbeta^{*0}||_2 +  \sum_{j=1}^p  \left( ||\hat{\vbeta}^j - \vbeta^{*j}||_2 + ||\vbeta^{*j}||_2 - ||\hat{\vbeta}^j||_2\right).
	\end{equation}
	Note, $||\vbeta^{*j}||_2 \leq ||\hat{\vbeta}^j - \vbeta^{*j}||_2+||\hat{\vbeta}^{j}||_2$, and for $j \notin \mathcal{S}$ that, 
	\begin{equation*}
	||\hat{\vbeta}^j - \vbeta^{*j}||_2 + ||\vbeta^{*j}||_2 - ||\hat{\vbeta}^j||_2 = ||\hat{\vbeta}^j||_2 - ||\hat{\vbeta}^j||_{2,\vm}=0.
	\end{equation*}
	Therefore upper bound becomes
	\begin{equation*}
	||\hat{\vbeta}^0 - \vbeta^{*0}||_{2,m} + \sum_{j \in \mathcal{S}} 2||\hat{\vbeta}^j - \vbeta^{*j}||_{2,m} \leq 2 \sum_{j \in \bar{\mathcal{S}}} ||\hat{\vbeta}^j - \vbeta^{*j}||_2.
	\end{equation*}
	Updating \eqref{coneB4} with the upper bound provides
	\begin{equation*}
	\frac{1}{2} \left( \sum_{j \in \bar{\mathcal{S}}^c} ||\hat{\vbeta}^j - \vbeta^{*j}||_2 + \sum_{j \in \bar{\mathcal{S}}} ||\hat{\vbeta}^j - \vbeta^{*j}||_2 \right)  \leq 2 \sum_{j \in \bar{\mathcal{S}}} ||\hat{\vbeta}^j - \vbeta^{*j}||_2.
	\end{equation*}
	Which is equivalent to  $\sum_{j \in \bar{\mathcal{S}}^c} ||\hat{\vbeta}^j - \vbeta^{*j}||_2 \leq 3 \sum_{j \in \bar{\mathcal{S}}} ||\hat{\vbeta}^j - \vbeta^{*j}||_2$. 
\end{proof}

\subsection{Proving Theorem 3.1}
The proof relies on techniques presented in \citet{qr_group_lasso}. First, we introduce lemmas similar to those in \citet{qr_group_lasso}. 

\subsubsection{Definitions}
Following definitions are used in the proof. 
\begin{eqnarray*}
	\mathcal{C}^\delta &=& \{ \vDelta \in \Real^{K(p+1)} | \vDelta \in \mathcal{C}, ||\vDelta||_2=\delta \}, \\
	\mathcal{B}_\delta &=& \{ \vbeta \in \Real^{K(p+1)} | \vbeta-\vbeta^* \in \mathcal{C}_\delta \}, \\ 
	U_{ki}(\vbeta_k) &=&  \left\{ \rho_{\tau_k}[y_i - \vx_i^\top \vbeta_k]-\rho_{\tau_k}[y_i - \vx_i^\top \vbeta^*_k]\right\}, \\
	U_k(\vbeta_k) &=& \sum_{i=1}^{n} U_{ki}(\vbeta_k).
\end{eqnarray*}

Knight's identity, found for $\tau=1/2$ in \citet{knightId} and generalized for any $\tau$ in \citet{qrBook}, provides that
\begin{equation*}
\rho_\tau(u-v) - \rho_\tau(u) = -v\left[\tau-I(u \leq 0)\right] + \int_0^v \left[I(u \leq s) - I(u \leq 0)\right] ds.
\end{equation*}

\subsubsection{Lemmas used in proof of Theorem 3.1}
\begin{lem}
	\label{ourA2}
	If Conditions \ref{condDens}-\ref{condRest} hold, $\delta \in (0,\tilde{d})$, $\lambda/2 \geq  \Lambda$, and $||\hat{\vbeta}-\vbeta^*||_2 \geq \delta$ then 
	\begin{equation}
	\underset{\vbeta-\vbeta^* \in \mathcal{C}^\delta}{\sup} \left|\sum_{k=1}^K  U_k(\vbeta_k) - E[U_k(\vbeta_k)]\right| > \delta \left( \frac{nc_1\tilde{b}}{3} \delta -  n \lambda \sqrt{s}\right).
	\end{equation}
\end{lem}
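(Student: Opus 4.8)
The plan is to convert the estimation-error hypothesis $\|\hat{\vbeta}-\vbeta^*\|_2\ge\delta$ into a statement on the boundary sphere $\mathcal{C}^\delta$ by a convexity reduction, and then to split the resulting basic inequality into a deterministic curvature term, a deterministic penalty term, and the centered empirical process whose supremum is to be lower bounded.

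First I would set up the reduction. Write the (unnormalized) penalized objective $Q(\vbeta)=\sum_{k=1}^K\sum_{i=1}^n \rho_{\tau_k}[y_i-\vx_i^\top\vbeta_k]+n\lambda\sum_{j=1}^p\|\vbeta^j\|_2$, which is convex, so the minimizer $\hat{\vbeta}$ satisfies $Q(\hat{\vbeta})\le Q(\vbeta^*)$. Because $\lambda/2\ge\Lambda$, Lemma 1 gives $\hat{\vbeta}-\vbeta^*\in\mathcal{C}$, and since $\|\hat{\vbeta}-\vbeta^*\|_2\ge\delta$ the segment from $\vbeta^*$ to $\hat{\vbeta}$ meets the radius-$\delta$ sphere at a point $\tilde{\vbeta}$. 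Convexity of $Q$ along the segment yields $Q(\tilde{\vbeta})\le Q(\vbeta^*)$, and because $\mathcal{C}$ is a cone we have $\tilde{\vbeta}-\vbeta^*\in\mathcal{C}^\delta$, i.e.\ $\tilde{\vbeta}\in\mathcal{B}_\delta$. It then suffices to lower bound the centered process at the single point $\tilde{\vbeta}$, which is dominated by the supremum over $\mathcal{C}^\delta$. From $Q(\tilde{\vbeta})-Q(\vbeta^*)\le 0$ and the definition of $U_k$,
\[
0\ \ge\ \sum_{k=1}^K E[U_k(\tilde{\vbeta}_k)]+\Big(\sum_{k=1}^K U_k(\tilde{\vbeta}_k)-E[U_k(\tilde{\vbeta}_k)]\Big)+n\lambda\sum_{j=1}^p\big(\|\tilde{\vbeta}^j\|_2-\|\vbeta^{*j}\|_2\big),
\]
so that $R:=\sum_k U_k(\tilde{\vbeta}_k)-E[U_k(\tilde{\vbeta}_k)]$ obeys $-R\ge \sum_k E[U_k(\tilde{\vbeta}_k)]+n\lambda\sum_{j=1}^p(\|\tilde{\vbeta}^j\|_2-\|\vbeta^{*j}\|_2)$.

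Next I would establish the two deterministic bounds. For the penalty, the reverse triangle inequality together with $\vbeta^{*j}=0$ for $j\notin\mathcal{S}$ and Cauchy--Schwarz over the $s$ active groups gives $\sum_{j=1}^p(\|\tilde{\vbeta}^j\|_2-\|\vbeta^{*j}\|_2)\ge -\sqrt{s}\,\|\tilde{\vbeta}-\vbeta^*\|_2=-\sqrt{s}\,\delta$, so the penalty contributes at least $-n\lambda\sqrt{s}\,\delta$. For the curvature, apply Knight's identity with $u=\epsilon_i^k$ and $v=\vx_i^\top(\tilde{\vbeta}_k-\vbeta_k^*)$; the linear term vanishes in expectation since $\vbeta_k^*$ is the true conditional $\tau_k$-quantile, leaving $E[U_{ki}(\tilde{\vbeta}_k)]=\int_0^{v}\big(F_{\epsilon^k_i\mid\vx_i}(t)-F_{\epsilon^k_i\mid\vx_i}(0)\big)\,dt$. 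The restriction $\delta\in(0,\tilde{d})$ keeps the relevant increments inside the window where the density lower bound applies, and combining this with the restricted-eigenvalue condition yields the quadratic lower bound $\sum_k E[U_k(\tilde{\vbeta}_k)]\ge \tfrac{nc_1\tilde{b}}{3}\delta^2$. This curvature step is the main obstacle: one must truncate the contributions with large $|v|$, where the density bound need not hold, verify that a fixed fraction of the mass survives the truncation uniformly over $\mathcal{C}^\delta$ (the source of the factor $1/3$), and only then invoke the restricted eigenvalue to pass from $\tfrac1n\sum_i(\vx_i^\top\vu)^2$ to $\|\vu\|_2^2$ with constant $\tilde{b}$.

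Finally, substituting both bounds gives $-R\ge \tfrac{nc_1\tilde{b}}{3}\delta^2-n\lambda\sqrt{s}\,\delta=\delta\big(\tfrac{nc_1\tilde{b}}{3}\delta-n\lambda\sqrt{s}\big)$. Since $|R|\ge -R$ and $R$ is the centered process evaluated at $\tilde{\vbeta}$ with $\tilde{\vbeta}-\vbeta^*\in\mathcal{C}^\delta$, it follows that $\sup_{\vbeta-\vbeta^*\in\mathcal{C}^\delta}\big|\sum_k U_k(\vbeta_k)-E[U_k(\vbeta_k)]\big|\ge |R|\ge\delta\big(\tfrac{nc_1\tilde{b}}{3}\delta-n\lambda\sqrt{s}\big)$, which is the claimed bound.
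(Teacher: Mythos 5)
Your overall architecture is exactly the paper's: reduce to the radius-$\delta$ sphere via the convexity/cone argument (the paper's $\tilde{\vbeta}=t\hat{\vbeta}+(1-t)\vbeta^*$ with $t=\delta/\|\hat{\vbeta}-\vbeta^*\|_2$, using Lemma 1 and the cone property), write the basic inequality $Q(\tilde{\vbeta})\le Q(\vbeta^*)$, split it into curvature, penalty, and centered empirical process, bound the penalty term by $n\lambda\sqrt{s}\,\delta$ via Cauchy--Schwarz, and absorb the randomness into the supremum over $\mathcal{C}^\delta$. All of those steps are correct and match the paper's proof essentially line by line.

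The gap is the curvature bound $\sum_{k} E[U_k(\tilde{\vbeta}_k)]\ge \tfrac{nc_1\tilde{b}}{3}\delta^2$, which you assert but do not prove, and whose mechanism you describe incorrectly; this is the step where all of the conditions and the hypothesis $\delta\in(0,\tilde{d})$ actually do their work. You attribute the factor $1/3$ to a truncation argument in which ``a fixed fraction of the mass survives,'' and you invoke a restricted eigenvalue to pass from $\tfrac1n\sum_i(\vx_i^\top\vu)^2$ to $\|\vu\|_2^2$. Neither matches the paper. There, after Knight's identity and $E[\psi_{\tau_k}(\epsilon_i^k)]=0$, the CDF increment is expanded to second order, giving $E[U_{ki}(\vbeta_k)]\ge \tfrac{c_1}{2}E\{[\vx_i^\top(\vbeta_k-\vbeta_k^*)]^2\}-\tfrac{c_2}{6}E[|\vx_i^\top(\vbeta_k-\vbeta_k^*)|^3]$, where $c_1$ bounds the density from below and $c_2$ bounds its derivative (Condition \ref{condDens}). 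The cubic remainder is then tamed not by truncation but by Condition \ref{condRest}, which bounds the ratio $E[|\vx_i^\top(\vbeta_k-\vbeta_k^*)|^3]/E\{[\vx_i^\top(\vbeta_k-\vbeta_k^*)]^2\}^{3/2}$, combined with $E\{[\vx_i^\top(\vbeta_k-\vbeta_k^*)]^2\}^{1/2}\le\sqrt{B_k}\,\delta$ (Condition \ref{condX}) and $\delta<\tilde{d}\le d_k$; this makes the remainder at most $\tfrac{c_1}{6}E\{[\vx_i^\top(\vbeta_k-\vbeta_k^*)]^2\}$, and the constant arises as $\tfrac{c_1}{2}-\tfrac{c_1}{6}=\tfrac{c_1}{3}$. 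Finally, the passage to $\|\vbeta_k-\vbeta_k^*\|_2^2$ uses the population bound $E\{[\vx_i^\top(\vbeta_k-\vbeta_k^*)]^2\}\ge\tilde{b}\|\vbeta_k-\vbeta_k^*\|_2^2$, not an empirical Gram-matrix condition --- the empirical fluctuation is precisely what has been exiled to the supremum term. Without carrying out this chain (or fully executing an alternative truncation argument under conditions that would support it), the specific constant $nc_1\tilde{b}/3$ in the lemma is unaccounted for, so the proposal as written does not establish the stated bound.
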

\begin{proof}
	First, we will show that under the given conditions there exists $\vbeta-\vbeta^* \in \mathcal{C}^\delta$. If $||\hat{\vbeta}-\vbeta^*||_2 = \delta$ then by Lemma \ref{lemCone}, $\hat{\vbeta}-\vbeta^* \in \mathcal{C}^\delta$. If $||\hat{\vbeta}-\vbeta^*||_2 = \delta^* > \delta$, let $t=\frac{\delta}{\delta^*}$ and define $\tilde{\vbeta}=t\hat{\vbeta}+(1-t)\vbeta^*$.  Note, that
	\begin{equation*}
	||\tilde{\vbeta}-\vbeta^*||_2 = ||t\hat{\vbeta}+(1-t)\vbeta^*-\vbeta^*||_2 = ||t\hat{\vbeta}-t\vbeta^*||_2 = t \delta^* = \delta. 
	\end{equation*}
	Similarly, 
	\begin{eqnarray*}
		\sum_{j \in \bar{\mathcal{S}}^c} ||\tilde{\vbeta}^j-\vbeta^{*j}||_2 &=& t \sum_{j \in \bar{\mathcal{S}}^c} ||\hat{\vbeta}^j-\vbeta^{*j}||_2, \\
		\sum_{j \in \bar{\mathcal{S}}} ||\tilde{\vbeta}^j-\vbeta^{*j}||_2 &=& t \sum_{j \in \bar{\mathcal{S}}} ||\hat{\vbeta}^j-\vbeta^{*j}||_2.
	\end{eqnarray*}
	Thus, $\tilde{\vbeta}-\vbeta^* \in \mathcal{C}^\delta$. 
	%
	%
	%
	
	In addition by convexity of $L_{\lambda}$ and definitions of $\hat{\vbeta}$ and $\tilde{\vbeta}$,
	\begin{equation*}
	L_{\lambda}(\tilde{\vbeta}) \leq  t L_{\lambda}(\hat{\vbeta}) + (1-t) L_{\lambda}(\vbeta^*) \leq t L_{\lambda}(\vbeta^*) + (1-t) L_{\lambda}(\vbeta^*) = L_{\lambda}(\vbeta^*).
	\end{equation*}
	Thus, by convexity of $L_{\lambda}$ and that $\mathcal{C}$ is a cone, there exists $\vbeta$ such that for $\vbeta-\vbeta^* \in \mathcal{C}^\delta$ such that
	\begin{eqnarray*}
		0 &\geq& 
		\sum_{k=1}^K  U_k(\vbeta_k) + n\lambda \sum_{j \in \mathcal{S}^c} ||\vbeta^j||_2 - ||\vbeta^{*j}||_2 + n \lambda \sum_{j \in \mathcal{S}} ||\vbeta^j||_2 - ||\vbeta^{*j}||_2\\ 
		&=& \sum_{k=1}^K  U_k(\vbeta_k) + n\lambda \sum_{j \in \mathcal{S}^c} ||\vbeta^j||_2 + n\lambda \sum_{j \in \mathcal{S}} ||\vbeta^j||_2 - ||\vbeta^{*j}||_2 \\
		&\geq&  \sum_{k=1}^K  U_k(\vbeta_k) + n\lambda \sum_{j \in \mathcal{S}} ||\vbeta^j||_2 - ||\vbeta^{*j}||_2 \geq \sum_{k=1}^K  U_k(\vbeta_k) - n\lambda \sum_{j \in \mathcal{S}} ||\vbeta^j-\vbeta^{j*}||_2.
	\end{eqnarray*}
	Starting with the first term, given that $\vbeta-\vbeta^* \in \mathcal{C}^\delta$
	\begin{eqnarray*}
		\sum_{k=1}^K  U_k(\vbeta_k) &=& \sum_{k=1}^K E[U_k(\vbeta_k)] + U_k(\vbeta_k) - E[U_k(\vbeta_k)] \\
		&\geq& \sum_{k=1}^K E[U_k(\vbeta_k)] - \underset{\vbeta-\vbeta^* \in \mathcal{C}^\delta}{\sup} \left| \sum_{k=1}^K U_k(\vbeta_k) - E[U_k(\vbeta_k)] \right|.
	\end{eqnarray*}
	Combining the above with the immediately preceding inequalities provides, 
	\begin{equation}
	\label{intermEq}
	\underset{\vbeta-\vbeta^* \in \mathcal{C}^\delta}{\sup} \left| \sum_{k=1}^K U_k(\vbeta_k) - E[U_k(\vbeta_k)] \right| \geq \sum_{k=1}^K E[U_k(\vbeta_k)]- n\lambda \sum_{j \in \mathcal{S}} ||\vbeta^j-\vbeta^{j*}||_2
	\end{equation}

	By Knight's identity, that $E[\psi_\tau(y_i - \vx_i^\top \vbeta^*_k)]=0$, and Taylor expansion where $\tilde{s}_i \in (0,\vx_i^\top[\vbeta_k-\vbeta_k^*])$,
	\begin{eqnarray*}
		E[U_k(\vbeta_k)] &=& 
		\sum_{i=1}^{n} E\left[ \int_{0}^{\vx_i^\top(\vbeta_k-\vbeta_{k}^*)}  sf_{ki}(s)+\frac{s^2}{2}f'(\tilde{s_i}) \right] \\
		&\geq& \frac{c_1}{2} \sum_{i=1}^{n} E\{[\vx_i^\top(\vbeta_k-\vbeta_{k}^*)]^2\} - E\left[ \left|\int_{0}^{\vx_i^\top(\vbeta_k-\vbeta_{k}^*)} \frac{s^2}{2} f'(\tilde{s_i}) ds\right|\right]  \\
		&\geq& \frac{c_1}{2} \sum_{i=1}^{n} E\{[\vx_i^\top(\vbeta_k-\vbeta_{k}^*)]^2\} -\frac{c_2}{6} E\left[|\vx_i^\top (\vbeta_k-\vbeta_{k}^*)|^3\right] \\
		&=& n \left\{ \frac{c_1}{2} E\{[\vx_i^\top(\vbeta_k-\vbeta_{k}^*)]^2\} -\frac{c_2}{6} E\left[|\vx_i^\top (\vbeta_k-\vbeta_{k}^*)|^3\right] \right\} \\ 
		&=& n \Biggl\{ \frac{c_1}{2} E\{[\vx_i^\top(\vbeta_k-\vbeta_{k}^*)]^2\} \\
		&& -\frac{c_2}{6} \frac{E\left[|\vx_i^\top (\vbeta_k-\vbeta_{k}^*)|^3\right]}{E\left\{[\vx_i^\top(\vbeta_k-\vbeta_{k}^*)]^2\right]^{3/2}}E\left\{[\vx_i^\top(\vbeta_k-\vbeta_{k}^*)]^2\right\}E\left\{[\vx_i^\top(\vbeta_k-\vbeta_{k}^*)]^2\right\}^{1/2} \Biggr\} \\
		&\geq& n \left\{ \frac{c_1}{2} E\{[\vx_i^\top(\vbeta_k-\vbeta_{k}^*)]^2\} -\frac{\delta \sqrt{B}_kc_2}{6} \frac{E\left[|\vx_i^\top (\vbeta_k-\vbeta_{k}^*)|^3\right]}{E\left\{[\vx_i^\top(\vbeta_k-\vbeta_{k}^*)]^2\right\}^{3/2}}E\left\{[\vx_i^\top(\vbeta_k-\vbeta_{k}^*)]^2\right\} \right\} \\
		&\geq& n \left( \frac{c_1}{2} E\{[\vx_i^\top(\vbeta_k-\vbeta_{k}^*)]^2\} -\frac{\delta \sqrt{B}_kc_2c_1}{d_kc_2\sqrt{B}_k6}E\left\{[\vx_i^\top(\vbeta_k-\vbeta_{k}^*)]^2\right\}\right) \\
		&=& n \left( \frac{c_1}{2} E\{[\vx_i^\top(\vbeta_k-\vbeta_{k}^*)]^2\} -\frac{\delta c_1}{d_k6}E\left\{[\vx_i^\top(\vbeta_k-\vbeta_{k}^*)]^2\right\}\right) \\
		&\geq&  n \left(\frac{c_1}{2} E\{[\vx_i^\top(\vbeta_k-\vbeta_{k}^*)]^2\} -\frac{c_1}{6}E\left\{[\vx_i^\top(\vbeta_k-\vbeta_{k}^*)]^2\right\} \right) \\
		&=& \frac{nc_1}{3} E\{[\vx_i^\top(\vbeta_k-\vbeta_{k}^*)]^2\} \geq \frac{nc_1\tilde{b}}{3} ||\vbeta_k-\vbeta_{k}^*||_2^2. 
	\end{eqnarray*}
	First and second inequalities use Condition \ref{condDens}, third inequality uses Condition \ref{condX} and that $||\vbeta-\vbeta^*||_2 \leq \delta$, fourth inequality uses Condition \ref{condRest}, fifth uses that $\delta < \tilde{d} \leq d_k$. Therefore,
	\begin{equation*}
	\sum_{k=1}^K E[U_k(\vbeta_k)] \geq \sum_{k=1}^K\frac{nc_1\tilde{b}}{3} ||\vbeta_k-\vbeta_{k}^*||_2^2 = \frac{nc_1\tilde{b}}{3} \sum_{k=1}^K  ||\vbeta_k-\vbeta_{k}^*||_2^2 
	\geq \frac{nc_1\tilde{b}}{3} \delta^2. 
	\end{equation*}
	
	Recall, $s = |\bar{\mathcal{S}}|$, applying the C-S inequality of $\sum_{j \in \bar{\mathcal{S}}} ||\vbeta^j-\vbeta^{j*}||_2 \leq \sqrt{ \sum_{j \in \bar{\mathcal{S}}}1 \sum_{j \in \bar{\mathcal{S}}} ||\vbeta^j-\vbeta^{j*}||_2^2}$ provides
	\begin{equation*}
	n\lambda \sum_{j \in \bar{\mathcal{S}}} ||\vbeta^j-\vbeta^{j*}||_2 \leq  n \lambda \sqrt{s} \delta.
	\end{equation*}
	Applying the above inequality and the lower bound for $E[U_k(\vbeta_k)]$ to \eqref{intermEq} completes the proof. 
\end{proof}

\begin{lem}
	\label{ourA3}
	Assuming Conditions \ref{condDens}-\ref{condRest} hold hold then for any $\delta > 0$ and $t \geq  K\sqrt{ 8n\tilde{B}} \delta$ then
	
	\begin{equation*}
	P\left(\underset{\vbeta \in \mathcal{B}_\delta}{\sup} \left|\sum_{k=1}^K  U_k(\vbeta_k) - E[U_k(\vbeta_k)]\right| \geq t \right) \\
	\leq 16p\, \exp\left\{-\frac{[t/\delta-32\sqrt{2nK}C_x\sqrt{s+1}]^2}{128^2\left[C_x^2(s+1)n\right]}\right\}.
	\end{equation*}
\end{lem}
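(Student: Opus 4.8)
\subsection*{Proof proposal}

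The plan is to linearize the check-loss increments with Knight's identity, use the restricted-cone geometry of Lemma~\ref{lemCone} to collapse the supremum over $\mathcal{B}_\delta$ into the tail of a maximum of $p+1$ Euclidean norms, and finish with a union bound (which produces the $16p$ prefactor) together with a Hoeffding/bounded-differences estimate for each group. Write $\vDelta=\vbeta-\vbeta^*$ and recall $\epsilon_i^k=y_i-\vx_i^\top\vbeta_k^*$. Applying the Knight identity termwise gives $U_{ki}(\vbeta_k)=-\vx_i^\top\vDelta_k\,\psi_{\tau_k}(\epsilon_i^k)+R_{ki}(\vbeta_k)$, where $R_{ki}$ is the remainder integral in the identity above. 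Because $E[\psi_{\tau_k}(\epsilon_i^k)]=0$, the first piece is already centered, so
\[
\sum_{k=1}^K U_k(\vbeta_k)-E[U_k(\vbeta_k)]=-\sum_{k,i}\vx_i^\top\vDelta_k\,\psi_{\tau_k}(\epsilon_i^k)+\sum_{k,i}\bigl\{R_{ki}(\vbeta_k)-E[R_{ki}(\vbeta_k)]\bigr\}.
\]
The first (linear) sum will carry the bound; the second (nonlinear) sum must be shown to be of the same or smaller order.

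For the linear part, collect the scores into group gradients $\vG^j:=\bigl(\sum_{i=1}^n x_{ij}\psi_{\tau_k}(\epsilon_i^k)\bigr)_{k=1}^K\in\Real^K$, $j=0,\dots,p$. Rearranging by group and applying Cauchy--Schwarz within each group gives, for every $\vbeta\in\mathcal{B}_\delta$,
\[
\Bigl|\sum_{k,i}\vx_i^\top\vDelta_k\,\psi_{\tau_k}(\epsilon_i^k)\Bigr|=\Bigl|\sum_{j=0}^p\langle\vDelta^j,\vG^j\rangle\Bigr|\le\Bigl(\max_{0\le j\le p}\|\vG^j\|_2\Bigr)\sum_{j=0}^p\|\vDelta^j\|_2.
\]
Here the restricted cone does the work: Lemma~\ref{lemCone} gives $\sum_{j\in\bar{\mathcal{S}}^c}\|\vDelta^j\|_2\le 3\sum_{j\in\bar{\mathcal{S}}}\|\vDelta^j\|_2$, whence $\sum_{j=0}^p\|\vDelta^j\|_2\le 4\sum_{j\in\bar{\mathcal{S}}}\|\vDelta^j\|_2\le 4\sqrt{s+1}\,\|\vDelta\|_2\le 4\sqrt{s+1}\,\delta$ by Cauchy--Schwarz over the $s+1$ active groups. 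Thus the linear part is bounded uniformly over $\mathcal{B}_\delta$ by $4\sqrt{s+1}\,\delta\,\max_{0\le j\le p}\|\vG^j\|_2$, an expression free of $\vbeta$.

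It then suffices to bound $P\bigl(\max_{0\le j\le p}\|\vG^j\|_2\ge a\bigr)$ with $a$ of order $t/(\sqrt{s+1}\,\delta)$. A union bound over the $p+1$ groups supplies the $16p$ prefactor. For fixed $j$ the $K$ coordinates of $\vG^j$ are sums of $n$ independent terms bounded by $C_x$ (Condition~\ref{condX} bounds $|x_{ij}|\le C_x$ and $|\psi_{\tau_k}|\le1$), so $E\|\vG^j\|_2\lesssim C_x\sqrt{nK}$ and $\|\vG^j\|_2$ is a bounded-differences functional of the $n$ observations; McDiarmid then gives a sub-Gaussian tail whose variance proxy is of order $nC_x^2$ once the fixed number of quantiles $K$ is absorbed into the numerical constants. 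Substituting $a=t/(4\sqrt{s+1}\,\delta)$, the square $a^2$ deposits the factor $(s+1)^{-1}$ into the denominator and reproduces the stated exponent, with centering $32\sqrt{2nK}\,C_x\sqrt{s+1}$; the hypothesis $t\ge K\sqrt{8n\tilde B}\,\delta$ is exactly what makes the numerator $t/\delta-32\sqrt{2nK}\,C_x\sqrt{s+1}$ positive, so the bound is non-vacuous. The constants $2,4,16,128$ are routine bookkeeping from the cone factor $4$, symmetrization, and Hoeffding's $\tfrac12$.

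The main obstacle is the nonlinear remainder $\sum_{k,i}\{R_{ki}-E[R_{ki}]\}$, which cannot be reduced to a single gradient by Hölder. The standard remedy is symmetrization followed by the Ledoux--Talagrand contraction principle---legitimate because $R_{ki}$ is $1$-Lipschitz in $\vx_i^\top\vDelta_k$ with a bounded ($[-1,1]$-valued) integrand---which strips the indicator and returns a Rademacher-weighted linear process of the same form, controlled over the cone by the same $\sqrt{s+1}\,\delta\,C_x\sqrt{nK}$ scale and absorbed into the constants. Verifying this domination uniformly over $\mathcal{B}_\delta$, and threading the $\sqrt{s+1}$ cone factor consistently through both the linear and the remainder pieces so that they merge into the single exponential, is the delicate part; the remaining steps are a union bound and a bounded-differences inequality.
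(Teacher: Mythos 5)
Your skeleton contains the same key ingredients as the paper's proof (symmetrization, Ledoux--Talagrand contraction, the cone-plus-Cauchy--Schwarz reduction to a maximum of $p+1$ group norms, a union bound over $j$), but there are genuine gaps. The most important one is that you have misidentified the role of the hypothesis $t \geq K\sqrt{8n\tilde{B}}\,\delta$. It is not there to make $t/\delta - 32\sqrt{2nK}C_x\sqrt{s+1}$ positive --- it cannot do that in general, since $\tilde{B}$ and $C_x^2(s+1)$ are unrelated quantities. It is the second-moment condition required by the probability-level symmetrization inequality (Lemma 2.3.7 of the van der Vaart--Wellner reference): using $[\rho_\tau(u-v)-\rho_\tau(u)]^2 \leq v^2$ one gets $K^{-1}\sum_k E[U_{ki}(\vbeta_k)^2] \leq \tilde{B}\delta^2$, and the hypothesis on $t$ is exactly the Chebyshev condition that lets one pass to the conditional Rademacher process at the cost of a factor $4$. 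Your remainder term $\sum_{k,i}\{R_{ki}-E[R_{ki}]\}$ is precisely where you invoke symmetrization, yet you never verify this condition, so that step is unjustified as written. Relatedly, the Knight-identity split is unnecessary and actively hurts you: $U_{ki}$ is itself $1$-Lipschitz in $\vx_i^\top(\vbeta_k-\vbeta_k^*)$ and vanishes at $0$, so the paper symmetrizes the whole centered process and applies contraction once; your split forces you to recombine two tail bounds (say at $t/2$ each), yielding a sum of two exponentials with degraded constants rather than the single stated bound, whose exact constants ($16p$, $32\sqrt{2nK}$, $128^2$) are used verbatim in the proof of Theorem 3.1 to convert $t/\delta \geq n\delta^*$ into the $16p^{1-D_1}$ rate.

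Second, your per-group tail bound has the wrong $K$-dependence. McDiarmid applied to $\|\vG^j\|_2$ has bounded differences $2C_x\sqrt{K}$ (all $K$ coordinates move when observation $i$ changes), hence variance proxy of order $nKC_x^2$ and centering $E\|\vG^j\|_2 \leq C_x\sqrt{nK}$; this produces an exponent whose denominator scales like $nK(s+1)C_x^2$. The stated lemma has denominator $128^2C_x^2(s+1)n$ with no $K$, and $K$ cannot be ``absorbed into numerical constants'' because it is tracked explicitly both here and in Theorem 3.1 (it enters $\delta^*$ and the hypothesis on $t$); for $K$ large your inequality does not imply the stated one. The paper obtains the correct $K$-dependence by normalizing the process by $K^{-1}$ before symmetrizing, contracting to the linear process $\sum_i r_i K^{-1}\sum_k \vx_i^\top(\vbeta_k-\vbeta_k^*)$, bounding it over the cone by $4\sqrt{s+1}\,\delta\, K^{-1}\max_j Z_j$ with $Z_j = \sqrt{K}\,|\sum_i x_{ij}r_i|$, and then applying a conditional moment-generating-function bound (Corollary A1 of the cited group-lasso paper, driven by $E_\vr[Z_j^2] \leq nKC_x^2$) optimized in the Chernoff parameter $\tilde{c}$. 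To salvage your architecture you would need to (i) justify the symmetrization of the remainder via the hypothesis on $t$, and (ii) replace McDiarmid by a per-group bound whose deviation scale, after the $K^{-1}$ normalization, is $C_x\sqrt{n}$ rather than $C_x\sqrt{nK}$; without these, what you prove is a different and weaker inequality than Lemma S\ref{ourA3}.
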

\begin{proof}
	Proof is similar to Lemma A3 from \citet{qr_group_lasso}. In the following we derive the upper bound for the equivalent probability of,
	$$
	P\left(\underset{\vbeta \in \mathcal{B}_\delta}{\sup} \left|K^{-1}\sum_{k=1}^K  U_k(\vbeta_k) - E[U_k(\vbeta_k)]\right| \geq tK^{-1} \right).
	$$
	Define $\tilde{U}(\vbeta) =\frac{1}{K} \sum_{k=1}^K  U_k(\vbeta_k) - E[U_k(\vbeta_k)]$. 
	Let $\vr = (r_{1},\ldots,r_{n})^\top \in \Real^{n}$ be a vector of independent Rademacher random variables independent of the data $\{\vx_i,y_i\}_{i=1}^n$. Define $P_\vr$ and $E_\vr$ as the conditional probability and expectation with respect to $\vr$ given the data. 
	
	Note, $[\rho_\tau(u-v)-\rho_\tau(u)]^2 \leq v^2$ for all $u$ and $v$. Therefore, $U_{ki}(\vbeta_k)^2 \leq (\vx_i^\top\vbeta_k)^2$. Then by Condition \ref{condX} it follows that for $\vbeta \in \mathcal{B}_\delta$ that $\frac{1}{K} \sum_{k=1}^K E[U_{ki}(\vbeta_k)^2] \leq \tilde{B} \delta^2$. Thus by Lemma 2.3.7 from \citet{vanWellner} and the assumption that $tK^{-1} \geq  \sqrt{ 8n\tilde{B}} \delta$ it follows that 
	\begin{equation*}
	P\left(\underset{\vbeta \in \mathcal{B}_\delta}{\sup} |\tilde{U}(\vbeta)| > tK^{-1} \right) \leq 4 E_\vr\left\{ P_{\vr}\left[ \underset{\vbeta \in \mathcal{B}_\delta}{\sup} \left| \sum_{i=1}^{n} r_i \frac{1}{K} \sum_{k=1}^K  U_{ki}(\vbeta_k)\right|> tK^{-1}/4\right]\right\}.
	\end{equation*}
	For any $\tilde{c}>0$, by Markov's inequality, 
	\begin{equation*}
	P_{\vr}\left[ \underset{\vbeta \in \mathcal{B}_\delta}{\sup} \left| \sum_{i=1}^{n} r_i \frac{1}{K} \sum_{k=1}^K U_{ki}(\vbeta_k)\right|> \frac{t}{4K}\right] 
	\leq \exp\left[-\frac{\tilde{c}t}{4K}\right]E_{\vr}\left\{\exp\left[ \tilde{c} \underset{\vbeta \in \mathcal{B}_\delta}{\sup} \left| \sum_{i=1}^{n} r_{i} \frac{1}{K} \sum_{k=1}^K U_{ki}(\vbeta_k)\right| \right]\right\}.
	\end{equation*}
	
	Following the logic presented in Lemma A.3 of \citet{qr_group_lasso} $|\frac{1}{K} \sum_{k=1}^K U_{ki}(\vbeta_k)|$ can be formulated as a contraction function. Thus by Theorem 4.12 from \citet{banachBook}
	\begin{equation*}
	E_{\vr}\left\{\exp\left[ \tilde{c} \underset{\vbeta \in \mathcal{B}_\delta}{\sup} \left| \sum_{i=1}^{n} r_{i} \frac{1}{K}\sum_{k=1}^K U_{ki}(\vbeta_k)\right| \right]\right\} \leq E_{\vr}\left\{\exp\left[ 2\tilde{c} \underset{\vbeta \in \mathcal{B}_\delta}{\sup} \left| \sum_{i=1}^{n} r_{i} \frac{1}{K}\sum_{k=1}^K \vx_i^\top(\vbeta_k-\vbeta_{k}^*)\right| \right]\right\}.
	\end{equation*}
	
	Define $Z_j = \sqrt{K \left(\sum_{i=1}^n x_{ij} r_i\right)^2}$ and $\tilde{\vx}_{i}^j = x_{ij}r_i \mathbf{1}_K \in \Real^K$. That is, $\tilde{\vx}_i^j$ is K replicates of $x_{ij}r_i$ and let $\tilde{\vx}^j = \sum_{i=1}^n \tilde{\vx}_{i}^j$. First, note that 
	\begin{equation*}
	||\tilde{\vx}_j||_2 = \left|\left|\sum_{i=1}^n r_ix_{ij}\mathbf{1}_K\right|\right|_2 = \sqrt{K \left(\sum_{i=1}^n x_{ij} r_i \right)^2} = \sqrt{K} \left| \sum_{i=1}^n x_{ij} r_i\right| = Z_j. 
	\end{equation*}
	
	Then, using the Cauchy-Schwarz inequality and Lemma \ref{lemCone} 
	\begin{eqnarray*}
		\left| \sum_{i=1}^{n} r_{i}K^{-1}\sum_{k=1}^K \vx_i^\top(\vbeta_k-\vbeta_{k}^*)\right| 
		&=& K^{-1} \left| \sum_{j=0}^p \tilde{\vx}^{j\top} \left( \vbeta^j - \vbeta^{j*} \right) \right| \leq K^{-1} \sum_{j=0}^p ||\tilde{\vx}^j||_2 ||\vbeta^j - \vbeta^{j*}||_2 \\
		&\leq& \underset{j \in \{0,\ldots,p\}}{\max} Z_j K^{-1} \sum_{j=0}^p ||\vbeta^j-\vbeta^{j*}||_2 \\
		&=& \underset{j \in \{0,\ldots,p\}}{\max} Z_j  K^{-1} \left[ \sum_{j \in \bar{\mathcal{S}}^c} ||\vbeta^j-\vbeta^{j*}||_2 + \sum_{j \in \bar{\mathcal{S}}} ||\vbeta^j-\vbeta^{j*}||_2 \right] \\
		&\leq& \underset{j \in \{0,\ldots,p\}}{\max} Z_j  K^{-1} \left[1+3\right] \sum_{j \in \bar{\mathcal{S}}} ||\vbeta^j-\vbeta^{j*}||_2 \\
		&\leq& \underset{j \in \{0,\ldots,p\}}{\max} Z_j  K^{-1} 4 \sqrt{s+1}\delta. 
	\end{eqnarray*}
	
	Therefore,
	\begin{eqnarray*}
		&& E_{\vr}\left\{\exp\left[ 2\tilde{c} \underset{\vbeta \in \mathcal{B}_\delta}{\sup} \left| \sum_{i=1}^{n} r_{i} K^{-1} \sum_{k=1}^K \vx_i^\top(\vbeta_k-\vbeta_{k}^*)\right| \right]\right\}\\ 
		&\leq& E_{\vr}\left\{\exp\left[ 8\tilde{c}K^{-1} \delta \sqrt{s+1} \underset{j \in \{0,\ldots,p\}}{\max} Z_j  \right] \right\} \\
		&\leq& \sum_{j=0}^p E_{\vr}\left\{\exp\left[ 8\tilde{c}K^{-1}\delta\sqrt{s+1} Z_j  \right] \right\}. 
	\end{eqnarray*}
	
	
	Then by Condition \ref{condX2} and that Rademacher random variables are independent and with a second moment of one, it follows that 
	\begin{equation*}
	E_{\vr}[Z_j^2] = E_{\vr}\left[K \left(\sum_{i=1}^n x_{ij} r_i\right)^2\right]  
	= K \sum_{i=1}^n x_{ij}^2 \leq nKC_x^2.  
	\end{equation*}
	In addition, $\underset{||\valpha||_2=1}{\sup} \sum_{i=1}^{n} \left(\valpha^\top \mathbf{1}_K x_{ij}\right)^2  \leq nC_x^2$. These two bounds, the definition of $Z_j = \left|\left|\sum_{i=1}^n r_i x_{ij} \mathbf{1}_K\right|\right|_2$, and Corollary A1 from \citet{qr_group_lasso} provide 
	\begin{equation*}
	E_{\vr}[\exp(CZ_j)] \leq 16 \exp\left(CC_x\sqrt{2nK}+4C^2nC_x^2 \right)=16\exp\left[CC_x\sqrt{n}\left(\sqrt{2K}+4CC_x\sqrt{n}\right)\right].
	\end{equation*}
	
	Using the above inequality, 
	
	\begin{eqnarray*}
		&& \sum_{j=0}^p E_{\vr}\left\{\exp\left[ 8\tilde{c}K^{-1}\delta\sqrt{s+1} Z_j  \right] \right\}\\
		&\leq& 16 p \, \exp\left[8C_x\tilde{c}K^{-1}\delta\sqrt{s+1}\sqrt{n}\left(\sqrt{2K}+32\tilde{c}C_xK^{-1}\delta\sqrt{s+1}\sqrt{n}\right)\right].
	\end{eqnarray*}
	
	Therefore, for $A=8\sqrt{2n}K^{-1/2}C_x\delta\sqrt{s+1}$ and $B=16K^{-2}\left[16C_x^2\delta^2(s+1)n\right]$, 
	
	\begin{eqnarray*}
		&& P_{\vr}\left[ \underset{\vbeta \in \mathcal{B}_\delta}{\sup} \left| \sum_{i=1}^{n} r_i K^{-1} \sum_{k=1}^K U_{ki}(\vbeta_k)\right|> t/(4K)\right]  \\
		&\leq&  16p\,\exp\left(A\tilde{c}+B\tilde{c}^2\right)\exp[-\tilde{c}t/(4K)]= 16p\,\exp\left[\tilde{c}[A-t/(4K)]+B\tilde{c}^2\right].
	\end{eqnarray*}
	
	Function is minimized at $\tilde{c}=\frac{tK^{-1}-4A}{8B}$, providing the final upper bound for the Lemma. and thus we have the following upper bound for the exponential function
	%
	%
\end{proof}

Below is Theorem 12 from \citet{grossLR}, a vector Bernstein inequality, restated here for ease of presentation. 
\begin{lem}
	\label{vecBern}
	For independent $\vx_i$ with $E[\vx_i]=0$, $V=\sum_{i=1}^n E[||\vx_i||_2^2]$, $M=\underset{i \in \{1,\ldots,n\}}{\max}||\vx_i||_2$ and $t \leq V/M$ then 
	\begin{equation}
	P\left( \left|\left|n^{-1}\sum_{i=1}^n \vx_i \right|\right|_2 \geq \sqrt{V}/n + t/n\right) \leq \exp\left(-\frac{t^2}{4V} \right). 
	\end{equation}
\end{lem}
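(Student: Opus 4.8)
The plan is to strip off the $n^{-1}$ normalization and prove the equivalent statement $P(\|\vS\|_2 \ge \sqrt{V} + t) \le \exp(-t^2/(4V))$, where $\vS = \sum_{i=1}^n \vx_i$. First I would control the mean of $\|\vS\|_2$. Since the $\vx_i$ are independent with $E[\vx_i] = \vnull$, every cross term in $E\|\vS\|_2^2 = \sum_{i,j} E\langle \vx_i, \vx_j\rangle$ vanishes, so $E\|\vS\|_2^2 = \sum_{i=1}^n E\|\vx_i\|_2^2 = V$, and Jensen's inequality gives $E\|\vS\|_2 \le \sqrt{V}$. Hence it suffices to establish the one-sided concentration bound $P(\|\vS\|_2 \ge E\|\vS\|_2 + t) \le \exp(-t^2/(4V))$.

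For the concentration step I would use a martingale argument. Writing $\mathcal{F}_k = \sigma(\vx_1,\dots,\vx_k)$, consider the Doob martingale $D_k = E[\|\vS\|_2 \mid \mathcal{F}_k]$, so that $D_0 = E\|\vS\|_2$, $D_n = \|\vS\|_2$, and $\|\vS\|_2 - E\|\vS\|_2 = \sum_{k=1}^n \xi_k$ with $\xi_k = D_k - D_{k-1}$. Integrating out the future coordinates, $D_k = h_k(\vx_k)$ where $h_k(\vx) = E[\, \| \sum_{i<k}\vx_i + \vx + \sum_{i>k}\vx_i \|_2 \mid \mathcal{F}_{k-1}]$ is $1$-Lipschitz in $\vx$ because the norm is, and $D_{k-1} = E[h_k(\vx_k) \mid \mathcal{F}_{k-1}]$. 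This representation supplies two estimates on the increments. They are uniformly bounded, $|\xi_k| \le 2M$, since $h_k$ is $1$-Lipschitz and $\|\vx_k\|_2 \le M$. More importantly, the predictable quadratic variation is controlled by $V$ rather than by the crude worst case $nM^2$: letting $\vx_k'$ be an independent copy of $\vx_k$, the identity $\mathrm{Var}(Y) = \tfrac12 E[(Y-Y')^2]$ for i.i.d.\ copies gives $E[\xi_k^2 \mid \mathcal{F}_{k-1}] = \mathrm{Var}(h_k(\vx_k) \mid \mathcal{F}_{k-1}) \le \tfrac12 E\|\vx_k - \vx_k'\|_2^2 = E\|\vx_k\|_2^2$, whence $\sum_{k=1}^n E[\xi_k^2 \mid \mathcal{F}_{k-1}] \le V$ deterministically.

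It then remains to feed these two bounds into Freedman's Bernstein-type martingale inequality, which with variance proxy $V$ and increment bound $2M$ yields $P(\sum_k \xi_k \ge t) \le \exp\{-t^2/(2(V + 2Mt/3))\}$. The hypothesis $t \le V/M$ bounds $2Mt/3 \le 2V/3$, so the denominator is at most $10V/3$ and the exponent is at most $-3t^2/(10V) \le -t^2/(4V)$, which is exactly the claim (with a little room to spare in the constant). Staying in this sub-Gaussian regime is precisely what the restriction $t \le V/M$ buys.

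The step I expect to be the crux is the control of the predictable quadratic variation by $V$. A naive bounded-differences (Hoeffding/McDiarmid) estimate only sees the $2M$ increment bound and produces an exponent of order $t^2/(nM^2)$, which is far too weak; the entire point of a Bernstein bound is to replace $nM^2$ by the genuine variance $V = \sum_i E\|\vx_i\|_2^2$. The device that makes this work is recognizing that after integrating out all coordinates except the $k$-th, the conditional increment is a centered $1$-Lipschitz function of the single vector $\vx_k$, so its conditional variance is governed by $E\|\vx_k\|_2^2$, and summing recovers $V$. Verifying the Lipschitz and centering bookkeeping for $h_k$, and then tracking the constants through Freedman's inequality, are the remaining and comparatively routine tasks.
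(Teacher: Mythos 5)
Your proof is correct, but note that the paper itself contains no proof of this statement: Lemma S\ref{vecBern} is explicitly imported as Theorem 12 of \citet{grossLR} (``restated here for ease of presentation'') and is used as a black box only in the proof of Lemma S\ref{ourA4}. Your argument is therefore a genuinely different, self-contained derivation. The opening reduction ($E\|\vS\|_2 \le \sqrt{V}$ by independence, zero means, and Jensen, so that the claim follows from one-sided concentration of $\|\vS\|_2$ about its mean) is the standard first step for vector Bernstein inequalities, but your concentration step --- the Doob martingale $D_k = E[\|\vS\|_2 \mid \mathcal{F}_k]$, the observation that conditionally on $\mathcal{F}_{k-1}$ the increment is a centered $1$-Lipschitz function of $\vx_k$ alone so that $E[\xi_k^2 \mid \mathcal{F}_{k-1}] \le \tfrac12 E\|\vx_k - \vx_k'\|_2^2 = E\|\vx_k\|_2^2$, followed by Freedman's inequality --- replaces the Banach-space concentration machinery behind the cited result with purely scalar martingale tools. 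The constants check out: Freedman with variance proxy $V$ and increment bound $2M$ gives exponent $t^2/\{2(V + 2Mt/3)\}$, and the hypothesis $t \le V/M$ makes this at least $3t^2/(10V) \ge t^2/(4V)$, so you even have slack in the constant. Two caveats you should make explicit in a write-up: (i) $M$ must be read as a deterministic almost-sure bound on $\max_i \|\vx_i\|_2$, otherwise ``$|\xi_k| \le 2M$'' is not a legitimate hypothesis for Freedman; this matches how the lemma is actually invoked in Lemma S\ref{ourA4}, where $\|\tilde{\vv}_{ij}\|_2 \le \sqrt{K}C_x$ holds deterministically; and (ii) your argument is complete only modulo Freedman's inequality, so in terms of logical economy you have traded one external theorem (Gross) for another, more textbook-standard one (Freedman), gaining an elementary and transparent proof in exchange.
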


\begin{lem}
	\label{ourA4}
	If conditions \ref{condDens}-\ref{condRest} hold and $b < nC_x\sqrt{K}$ then 
	\begin{equation*}
	P\left( \Lambda > C_x\sqrt{K}/\sqrt{n} + b/n \right) \leq 2p\exp\left(-\frac{b^2}{4nKC_x^2}\right).
	\end{equation*}
\end{lem}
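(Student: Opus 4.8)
The plan is to recognize $\Lambda$ as a maximum over $j \in \{0,\ldots,p\}$ of the Euclidean norm of a mean-zero average of independent $K$-vectors, and then apply the vector Bernstein inequality of Lemma \ref{vecBern} for each group index $j$, finishing with a union bound over $j$.

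First I would unpack the definition of $\Lambda$ from the proof of Lemma 1: $\Lambda = \max_{0 \le j \le p} ||\vv^j||_2$, where $\vv^j \in \Real^K$ has $k$-th component $n^{-1}\sum_{i=1}^n x_{ij}[\tau_k - I(\epsilon_i^k \le 0)]$. Writing $\vw_i^j \in \Real^K$ for the vector with components $x_{ij}[\tau_k - I(\epsilon_i^k \le 0)]$, we have $\vv^j = n^{-1}\sum_{i=1}^n \vw_i^j$. Since $\vbeta_k^*$ is the true $\tau_k$-quantile coefficient, $P(\epsilon_i^k \le 0 \mid \vx_i) = \tau_k$, so $E[\tau_k - I(\epsilon_i^k \le 0) \mid \vx_i] = 0$ and hence $E[\vw_i^j] = \vnull$, which is the mean-zero hypothesis needed for Lemma \ref{vecBern}.

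Next I would compute the two constants that feed into the Bernstein bound. Because $\tau_k - I(\epsilon_i^k \le 0)$ is $\tau_k$ minus a Bernoulli$(\tau_k)$ variable, its conditional second moment is $\tau_k(1-\tau_k) \le 1$, giving $\sum_i E[||\vw_i^j||_2^2] = \sum_k \tau_k(1-\tau_k)\sum_i E[x_{ij}^2] \le K\sum_i E[x_{ij}^2] \le nKC_x^2 =: V$ by Condition \ref{condX2}; and $||\vw_i^j||_2^2 = x_{ij}^2\sum_k[\tau_k - I(\epsilon_i^k \le 0)]^2 \le K x_{ij}^2 \le KC_x^2$, so $||\vw_i^j||_2 \le \sqrt{K}C_x =: M$ almost surely. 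Applying Lemma \ref{vecBern} to $\vv^j$ with these $V$ and $M$ and with $t = b$, the range condition $t \le V/M$ reads exactly $b \le nKC_x^2/(\sqrt{K}C_x) = nC_x\sqrt{K}$, which is the assumed $b < nC_x\sqrt{K}$; and since $\sqrt{V}/n = C_x\sqrt{K}/\sqrt{n}$ and $t^2/(4V) = b^2/(4nKC_x^2)$, we obtain $P(||\vv^j||_2 \ge C_x\sqrt{K}/\sqrt{n} + b/n) \le \exp(-b^2/(4nKC_x^2))$ for each fixed $j$.

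Finally, I would close with a union bound: $\{\Lambda > C_x\sqrt{K}/\sqrt{n} + b/n\} = \bigcup_{j=0}^p \{||\vv^j||_2 > C_x\sqrt{K}/\sqrt{n} + b/n\}$, so summing the per-$j$ bound over the $p+1$ indices and using $p+1 \le 2p$ for $p \ge 1$ yields the stated $2p\exp(-b^2/(4nKC_x^2))$. The only real subtlety I anticipate is the bookkeeping of constants so that the chosen $V$ and $M$ make the Bernstein range condition coincide with the hypothesis $b < nC_x\sqrt{K}$; relatedly, I would confirm that Lemma \ref{vecBern} is being invoked with $V$ as an upper bound for $\sum_i E||\vw_i^j||_2^2$ and $M$ as an almost-sure upper bound for $||\vw_i^j||_2$, which is precisely how the clean exponent and the leading term $C_x\sqrt{K}/\sqrt{n}$ emerge.
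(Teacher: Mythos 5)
Your proposal matches the paper's proof essentially step for step: the paper also writes $\vv_j$ as an average of the independent mean-zero vectors $\tilde{\vv}_{ij}$ with components $x_{ij}[\tau_k - I(\epsilon_i^k \le 0)]$, bounds $\sum_i E[||\tilde{\vv}_{ij}||_2^2] \le nKC_x^2$ and $||\tilde{\vv}_{ij}||_2 \le \sqrt{K}C_x$, invokes the vector Bernstein inequality (Lemma S\ref{vecBern}, Theorem 12 of \citet{grossLR}) with $t=b$, and finishes with the same union bound using $p+1 \le 2p$. Your treatment is, if anything, slightly more explicit about the mean-zero verification and the Bernstein range condition, but it is the same argument.
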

\begin{proof}
	To get the desired upper bound we will apply Theorem 12 from \citet{grossLR} to establish a concentration inequity for $||\vv_j||_2$ and then apply a union bound to get the desired result for $\Lambda$. Define $\tilde{v}_{ijk}=x_{ij}[\tau_k-I(\epsilon_i^k \leq 0)]$ and $\tilde{\vv}_{ij} = (\tilde{v}_{ij1},\ldots,\tilde{v}_{ijK})^\top \in \Real^K$. Note that $\vv_j = \frac{1}{n} \sum_{i=1}^n \tilde{\vv}_{ij}$. Below are bounds needed to apply Theorem 12 from \citet{grossLR}
	\begin{eqnarray*}
		\sum_{i=1}^n E[||\tilde{\vv}_{ij}||_2^2] &=& \sum_{i=1}^n \sum_{k=1}^K E\{x_{ij}^2[\tau_k-I(\epsilon_i^k \leq 0)]^2\} \leq nKC_x^2, \\
		||\tilde{\vv}_{ij}||_2 &=& \sqrt{\sum_{k=1}^K x_{ij}^2[\tau_k-I(\epsilon_i^k \leq 0)]^2} \leq \sqrt{K}C_x.
	\end{eqnarray*}
	Applying Lemma 12 from \citet{grossLR} and assuming $b \leq n\sqrt{K}C_x$ then, 
	\begin{eqnarray*}
		P\left(||\vv_j||_2 \geq C_x\sqrt{K}/\sqrt{n} + b/n\right) \leq \exp\left(-\frac{b^2}{4nKC_x^2}\right). 
	\end{eqnarray*}
	To complete the proof, $p+1 \leq 2p$ and 
	\begin{equation*}
	P(||\Lambda||_2 \geq C_x\sqrt{K}/\sqrt{n} + b/n) \leq \sum_{j=0}^p P\left(||\vv_j||_2 \geq C_x\sqrt{K}/\sqrt{n} + b/n\right).
	\end{equation*}
	%
\end{proof}

\subsubsection{Proof of Theorem \ref{bigTheorem}}
\begin{proof}
	Define, 
	\begin{eqnarray*}
		\delta &=& \frac{3}{c_1 \tilde{b}} \left[ \max\left(\delta^*, K\sqrt{\frac{8\tilde{B}}{n}}\right)+c_2\lambda \sqrt{s}\right], \\
		t &=& \delta \left( \frac{n c_1 \tilde{b}}{3} \delta - c_2 n \lambda \sqrt{s}\right).
	\end{eqnarray*}
	
	Note these definitions satisfy $t \geq \sqrt{8\tilde{B}n}\delta$, a condition of Lemma S\ref{ourA3}, as 
	\begin{eqnarray*}
		t &=& \delta \left( \frac{n c_1 \tilde{b}}{3} \frac{3}{c_1 \tilde{b}} \left[ \max\left(\delta^*, K\sqrt{\frac{8\tilde{B}}{n}}\right)+c_2\lambda \sqrt{s}\right] - c_2 n \lambda \sqrt{s}\right) \\
		&=& \delta \left( n \left[ \max\left(\delta^*, K\sqrt{\frac{8\tilde{B}}{n}}\right)+c_2\lambda \sqrt{s}\right] - c_2 n \lambda \sqrt{s}\right) \\
		&=& n \delta \left[ \max\left(\delta^*, K\sqrt{\frac{8\tilde{B}}{n}}\right)\right] \geq K\sqrt{8\tilde{B}n}\delta.
	\end{eqnarray*}
	Using similar logic $t/\delta \geq n \delta^*$. While conditions of the Theorem guarantee that $\delta < \tilde{d}$. 
	By Lemma S\ref{ourA2} and definition of $t$,
	\begin{equation*}
	P(||\hat{\vbeta}-\vbeta^*||_2 \geq \delta \cap \lambda/2 \geq \Lambda) \leq P\left(\left|\sum_{k=1}^K  U_k(\vbeta_k) - E[U_k(\vbeta_k)]\right| > t \right).
	\end{equation*}
	
	Recall, $\delta^* = 8C^*\sqrt{\frac{s+1}{n}}\left[4D_1\sqrt{\log(p)}+\sqrt{2K}\right]$. By Lemma S\ref{ourA3}, and using that $t/\delta \geq n\delta^*$,
	\begin{eqnarray*}
		&& 
		P\left(\left|\sum_{k=1}^K  U_k(\vbeta_k) - E[U_k(\vbeta_k)]\right| > t \right) \\
		&\leq& 16p\, \exp\left\{-\frac{[t/\delta-32\sqrt{2nK}C_x\sqrt{s+1}]^2}{128^2\left[C_x^2(s+1)n\right]}\right\} \\
		&\leq& 16p\, \exp\left\{-\frac{[n\delta^*-32\sqrt{2nK}C_x\sqrt{s+1}]^2}{128^2\left[C_x^2(s+1)n\right]}\right\}=16 p^{1-D_1}. \\
	\end{eqnarray*}
	Note, the definition of $\lambda$ satisfies the conditions of Lemma S\ref{ourA4} and thus, 
	\begin{equation*}
	P(||\hat{\vbeta}-\vbeta^*||_2 \geq \delta) \leq 16p^{1-D_1}+2p^{1-D_2}.  
	\end{equation*} 
\end{proof}

\bibliographystyle{apalike}

\bibliography{acrossTau}